\documentclass[a4paper,USenglish,cleveref,autoref,thm-restate]{lipics-v2021}

\hideLIPIcs

\usepackage{amsmath,amsthm,amssymb,thmtools,thm-restate,booktabs,graphicx,latexsym,xcolor,xspace}
\usepackage{float}
\usepackage{bm}
\usepackage{tikz}
\usepackage{mathrsfs}
\usepackage{hyperref}

\definecolor{blueLink}{rgb}{0,0.2,0.8}
\hypersetup{colorlinks,linkcolor=blueLink,urlcolor=blueLink,citecolor=blueLink}

\usepackage{stmaryrd}
\usepackage{todonotes}
\usetikzlibrary{positioning,calc,arrows.meta,shapes.geometric,fit,shadows,decorations.pathmorphing}
\usetikzlibrary{backgrounds,trees}
\pgfdeclarelayer{background}
\pgfsetlayers{background,main}
\usetikzlibrary{patterns, patterns.meta}
\definecolor{prediction}{HTML}{10598A}
\definecolor{correctnode}{HTML}{1F77B4}
\definecolor{byznode}{HTML}{FF7F0E}
\definecolor{darkgreen}{HTML}{097969}
\tikzset{
  cor/.style 2 args={
    draw=black, thick,
    fill=correctnode!30,
    ellipse, minimum width=#1cm, minimum height=#2cm,
    inner sep=0pt, outer sep=3pt
  }
}

\tikzset{
  byz/.style 2 args={
    draw=black, thick,
    fill=byznode!40,
    ellipse, minimum width=#1cm, minimum height=#2cm,
    inner sep=0pt, outer sep=3pt,
    decorate, decoration={zigzag, segment length=4pt, amplitude=0.5pt}
  }
}

\tikzset{
  normal/.style 2 args={
    draw=black, very thick,
    fill=black!10,
    ellipse, minimum width=#1cm, minimum height=#2cm,
    inner sep=0pt, outer sep=3pt
  }
}

\usepackage[ruled, lined, linesnumbered, commentsnumbered, noend, ]{algorithm2e}
\usepackage{subcaption}
\usepackage{multirow}

\newcommand{\alg}{\textsf{ALG}\xspace}
\newcommand{\authAlg}{Dolev-Strong }

\newcommand{\globalg}{\text{pred\_ba}}
\newcommand{\localg}{\text{auth\_pred\_ba}}

\title{
  Resilient Byzantine Agreement with Predictions
}

\author{Julien Dallot}{TU Berlin, Germany}{julien.dallot@tu-berlin.de}{}{}
\author{Darya Melnyk}{TU Berlin, Germany}{melnyk@tu-berlin.de}{}{}
\author{Tijana Milentijevic}{TU Berlin, Germany}{tijana.milentijevic@tu-berlin.de}{}{}
\author{Stefan Schmid}{TU Berlin and Weizenbaum Institute, Germany}{stefan.schmid@tu-berlin.de}{}{}
\author{Patrik Welters}{HU Berlin, Germany}{patrik.welters@student.hu-berlin.de}{}{}

\authorrunning{J. Dallot, D. Melnyk, T. Milentijevic, S. Schmid and P. Welters}

\ccsdesc[500]{Theory of computation~Machine learning theory}
\ccsdesc[500]{Theory of computation~Distributed algorithms}

\keywords{Learning-Augmented Algorithms, Byzantine Agreement, Distributed Algorithms}

\funding{
This work was supported by the German Research Foundation (DFG), Schwerpunktprogramm SPP 2378: ReNO-2 (511099228), 2025-2029
}

\nolinenumbers

\begin{document}
\maketitle 

\begin{abstract}
  

  The Byzantine Agreement (BA) problem is a fundamental task in distributed computing where nodes in a network need to agree on a common output in the presence of arbitrary (worst-case) node failures. In real-world applications, nodes can be monitored over time to provide ML predictions of faulty behavior in a network. In this work, we answer the question of whether predictions can improve the fault tolerance of a distributed system without sacrificing correctness when the predictions are wrong.
  We consider a prediction-augmented study of Byzantine agreement in which each node receives, in addition to its input bit, a predicted set of honest nodes.
  We focus on \emph{algorithmic resilience} --- the maximum number of faulty nodes an algorithm can tolerate --- and present algorithms and impossibility results whose resilience depends on the accuracy of the predictor.
  As our first main result, we bring a complete characterization of the consistency--robustness trade-offs in both the non-authenticated and authenticated settings:
  for $n$ nodes and a parameter $\alpha \in [0, 1]$, we present algorithms that tolerate up to $\alpha \cdot n$ faulty nodes when the predictor is correct (\emph{consistency}), and up to $\frac{1-\alpha}{2} \cdot n - 1$ faulty nodes when the predictor is arbitrarily wrong (\emph{robustness)}. In the authenticated setting, the robustness bound improves to $(1-\alpha) \cdot n - 1$.
  We prove matching impossibility results, showing that these tradeoffs are optimal and independent of the particular prediction system.
  Our second main result characterizes \emph{smoothness}: the rate at which resilience degrades as the predictor becomes less accurate.
  We show that resilience linearly decreases in the number of wrong predictions as long as that number stays within a constant fraction of $n$.
  Concretely, in the non-authenticated setting, each additional wrong prediction loses one unit of resilience, whereas in the authenticated setting, the decline is halved, since two wrong predictions are needed to lose one unit of resilience.
\end{abstract}




\maketitle

\section{Introduction}
\label{sec:introduction}

Byzantine Agreement (BA)~\cite{TwoGenerals,PeaseShostackLamport} is a most fundamental and intensively studied problem in distributed computing. It requires nodes to agree on a common output despite arbitrary, worst-case node failures, called Byzantine behavior in the literature. 
This work explores how predictions about the other nodes' behavior can be used to improve the resilience of BA. Indeed, in practice, nodes can often monitor and collect data about other nodes, and use machine-learned classifiers, anomaly detectors, or reputation algorithms, to guess how trustworthy other nodes are. 
This is a natural model for many distributed systems operating over longer time periods, such as blockchain: suspicious behavior of the participants can be monitored and, to some extent, detected~\cite{cachin2017blockchain}. However, with new participants and new capabilities of these systems, mistakes in the prediction are not preventable. 

Motivated by Machine Learning (ML), prediction-augmented algorithms have already received much attention and have been applied successfully in various domains already. Early work originated in the context of online algorithms, where predictions about the future events are used by algorithms to overcome worst-case bounds~\cite{DBLP:conf/icml/LykourisV18}, and many significant insights have been obtained how to improve the competitive ratio of different online algorithms~\cite{gupta2022augmenting, eliavs2024learning, purohit2018improving, zhao2024learning, daneshvaramoli2025nearoptimal, DBLP:conf/icml/LykourisV18}. But also data structures~\cite{benomar2024learning} and most recently also distributed computing tasks~\cite{GAwithPredictions,BAwithPredictions} have been improved with predictions. 
In particular, Ben-David et al.~\cite{BAwithPredictions} studied whether nodes can benefit from predictions about the correctness of other nodes in Byzantine Agreement to speed up agreement, hence focusing on the \emph{efficiency}.

In this paper, we are interested in the \emph{resilience} of prediction-augmented algorithms. In particular, we initiate the study of 
whether predictions about node behaviors can allow us to tolerate more node failures in Byzantine Agreement. This is challenging since predictions are inherently unreliable. And while accurate predictions can substantially improve the resilience, relying on them unconditionally can compromise correctness when they are wrong. Hence, as with any prediction-augmented algorithm, a useful prediction-augmented Byzantine agreement protocol should navigate the trade-off between \emph{consistency}, i.e., benefit when predictions are correct, and \emph{robustness}, i.e., not suffer when predictions are of poor quality or even adversarial. Beyond these extremes, we aim for \emph{smoothness}: the protocol should lose resilience gradually, rather than abruptly, as the number of incorrect predictions increases.


\subsection{Contributions}
\label{sec:contributions}

This paper initiates the study of whether and how ML predictions can be used to improve the resilience of algorithms for the Byzantine Agreement problem.
Specifically, we ask: given $n$ nodes and $t$ faults, can we use predictions to overcome the upper bounds of $n>3t$ (without authentication)~\cite{PeaseShostackLamport}
and $n>2t$ (with authentication)~\cite{TwoGenerals} on the resilience of the agreement protocols? 
We answer these questions affirmatively. 

We design prediction-augmented algorithms which, given a prediction of which nodes are trustworthy (honest), surpass any prediction-free algorithm in case of accurate predictions (i.e., consistency) while ensuring provable resilience guarantees \emph{against any prediction} (i.e., robustness).
In addition, we study the resilience of our algorithms as the prediction accuracy declines (i.e., smoothness) and provide resilience guaranties for any fraction of accurately predicted nodes.
Our algorithms are tuned with a \emph{trust parameter} $\alpha \in [0, 1]$ which is the desired consistency (tolerated fraction of faulty nodes for good predictions) and interpolates between prudent or risky algorithms depending on how much one trusts the predictor.
We provide theoretical results for four settings: we consider two communication models, where the communication channels can either be authenticated or not; and we consider two models of predictions, a global model in which the predictions are publicly available to all nodes (e.g., broadcast by an ML monitoring system) and a local model in which predictions are only available locally. Our results for global predictions are summarized in \autoref{fig:smoothness-curves}.
Starting from existing, prediction-free agreement algorithms, we build prediction-augmented algorithms to achieve better resilience bounds in the non-authenticated setting (\autoref{sec:non-auth} and \autoref{sec:non-auth-impossibility}) as well as the authenticated setting (\autoref{sec:auth}) and finally derive impossibility results for local predictions (\autoref{sec:localPredictions}).


\noindent\textbf{Characterizing consistency and robustness.}
Our first technical contribution is a complete characterization of the possible consistency and robustness trade-offs under global predictions for non-authenticated and authenticated channels.
Results in the global setting are summarized in~\autoref{tab:consistency-robustness-summary}.
\begin{table}[t]
  \centering
  \setlength{\tabcolsep}{8pt}
  \renewcommand{\arraystretch}{1.4}
  \small
  \begin{tabular}{|l|l|c|c|c|}
    \hline
    Pred. & Communications & \textbf{Consistency} & \textbf{Robustness} & \textbf{Impossible Rob.} \\
    \hline
    \multirow{2}{*}{Global}
    & Non-Auth & $\alpha \in \left[\tfrac{1}{3}, 1\right]$ & $\tfrac{1-\alpha}{2} - \frac{1}{n} \ ^{(\text{Thm.} \ref{thm:consistency-robustness})}$ & $\frac{1-\alpha}{2} \ ^{(\text{Thm.} \ref{thm:impossibility-consistency-robustness})}$ \\[4pt]
    \cline{2-5}
    & Auth     & $\alpha \in \left[\tfrac{1}{2}, 1\right]$ & $(1-\alpha) - \frac{1}{n} \ ^{(\text{Thm.} \ref{thm:auth-consistency-robustness})}$           & $(1-\alpha) \ ^{(\text{Thm.} \ref{thm:auth-impossibility-consistency-robustness})}$           \\
    \hline
    \multirow{2}{*}{Local}
    & Non-Auth & $\alpha \in \left[\tfrac{1}{3}, \tfrac{1}{2}\right[$ & (unknown)                          & (unknown)                   \\
    \cline{2-5}
    & Non-Auth \& Auth     & $\alpha \in \left[\tfrac{1}{2}, 1\right]$ & $0$                                 & $\frac{1}{n} \ ^{(\text{Thm.} \ref{thm:impossibility-local})}$                          \\
    \hline
  \end{tabular}
  \vspace{8pt}
  \caption{
    Our results on consistency-robustness trade-offs.
    On each row, the first two columns indicate the setting, the third column shows the target consistencies that are relevant --- so it is superior to prediction-free algorithms ---, the fourth column shows our robustness for the target consistency $\alpha$, and the last column shows an impossible robustness for a consistency of $\alpha$.
  }
  \label{tab:consistency-robustness-summary}
\end{table}
We derive our tight characterization with indistinguishability arguments~\cite{TwoGenerals,PeaseShostackLamport}: we show that, given any algorithm with a certain consistency $\alpha$, we can construct a set of configurations and prediction inputs among which at least one defeats any algorithm with a robustness too high ($\frac{1-\alpha}{2}$ for the non-authenticated case, $1 - \alpha$ for the authenticated case).
These results show that trade-offs cannot be avoided when it comes to resilience: unlike round complexity~\cite{BAwithPredictions}, it is not possible to use predictions to strictly improve an algorithm in all cases, for instance by running a ``back-up'' algorithm that could take over if the prediction-augmented algorithm fails --- in our case, it is impossible to detect that the prediction-augmented algorithm has failed in the first place.

\begin{figure}[tb]
  \centering
  \begin{subfigure}[t]{0.48\textwidth}
  \centering
  \begin{tikzpicture}[xscale=0.55, yscale=0.5]

    \draw[->] (0, 0) -- (9, 0) node[below] {\scriptsize $\eta$};
    \draw[->] (0, 0) -- (0, 9) node[left] {\scriptsize $f$};

    \def\Fx{0}
    \def\Fy{8}
    \coordinate (F) at (\Fx, \Fy);

    \def\Gx{2.67}
    \def\Gy{2.67}
    \coordinate (G) at (\Gx, \Gy);

    \def\Hx{4.8}
    \def\Hy{1.6}
    \coordinate (H) at (\Hx, \Hy);

    \def\Ix{4.8}
    \def\Iy{8}
    \coordinate (I) at (\Ix, \Iy);

    \def\Jx{0.8}
    \def\Jy{6.4}
    \coordinate (J) at (\Jx, \Jy);

    \draw[thick, red] (0, \Jy) -- (J) -- (G) -- (H);

    \fill[fill=red, pattern color=red!20, pattern=north west lines] (0, \Jy) -- (J) -- (G) -- (H) -- (I) -- (F) -- cycle;

    \def\Ax{0}
    \def\Ay{6.4}
    \coordinate (A) at (\Ax, \Ay);

    \def\Bx{1.6}
    \def\By{4.8}
    \coordinate (B) at (\Bx, \By);

    \def\Cx{3.6}
    \def\Cy{0.8}
    \coordinate (C) at (\Cx, \Cy);

    \def\Dx{8}
    \def\Dy{0.8}
    \coordinate (D) at (\Dx, \Dy);

    \draw[thick] (A) -- (B) -- (C) -- (D);

    \draw[thick] (0 - 0.15, 8) -- (0 + 0.15, 8) node[left, xshift=-9pt] {\scriptsize $n$};
    \draw[thick] (\Ax - 0.15, \Ay) -- (\Ax + 0.15, \Ay) node[left, xshift=-9pt] {\scriptsize $\alpha \cdot n$};
    \draw[dashed, gray] (0, \By) -- (\Bx, \By);
    \draw[thick] (-0.15, \By) -- (0.15, \By) node[left, xshift=-9pt] {\scriptsize $(2 \alpha - 1) n$};
    \draw[dashed, gray] (0, \Cy) -- (\Cx, \Cy);
    \draw[thick] (-0.15, \Cy) -- (0.15, \Cy) node[left, xshift=-9pt] {\scriptsize $\frac{1-\alpha}{2} n - 1$};
    \draw[thick, darkgreen] (0, 2.67) -- (8, 2.67);
    \draw[thick] (-0.15, 2.67) -- (0.15, 2.67) node[left, xshift=-9pt] {\scriptsize $n / 3$};

    \draw[dashed, gray] (\Bx, 0) -- (\Bx, \By);
    \draw[thick] (\Bx, -0.15) -- (\Bx, 0.15) node[below, yshift=-7pt, xshift=-6pt] {\scriptsize $(1 - \alpha) n - 1$};
    \draw[dashed, gray] (\Cx, 0) -- (\Cx, \Cy);
    \draw[thick] (\Cx, -0.15) -- (\Cx, 0.15) node[below, yshift=-7pt, xshift=10pt] {\scriptsize $\frac{1 + \alpha}{4} n + 1$};
    \draw[dashed, gray] (\Dx, 0) -- (\Dx, \Dy);
    \draw[thick] (\Dx, -0.15) -- (\Dx, 0.15) node[below, yshift=-7pt] {\scriptsize $n$};

  \end{tikzpicture}
  \caption{
    \centering
    Non-Authenticated Channels
  }
  \label{fig:smoothness-curve}
  \end{subfigure}
  \hfill
  \begin{subfigure}[t]{0.48\textwidth}
  \centering
  \begin{tikzpicture}[xscale=0.55, yscale=0.5]

    \draw[->] (0, 0) -- (8, 0) node[below] {\scriptsize $\eta$};
    \draw[->] (0, 0) -- (0, 9) node[left] {\scriptsize $f$};

    \def\Fx{8*0.2}
    \def\Fy{8*0.8}
    \coordinate (F) at (\Fx, \Fy);

    \def\Gx{6.4}
    \def\Gy{1.6}
    \coordinate (G) at (\Gx, \Gy);

    \def\Hx{6.4}
    \def\Hy{8}
    \coordinate (H) at (\Hx, \Hy);

    \def\Ix{8*0.2}
    \def\Iy{8}
    \coordinate (I) at (\Ix, \Iy);

    \def\Ax{0}
    \def\Ay{6.4}
    \coordinate (A) at (\Ax, \Ay);

    \def\Jx{0}
    \def\Jy{8}
    \coordinate (J) at (\Jx, \Jy);

    \draw[thick, red] (A) -- (F) -- (G);

    \fill[fill=red, pattern color=red!20, pattern=north west lines] (A) -- (F) -- (G) -- (H) -- (I) -- (J) -- cycle;

    \def\Bx{3.2}
    \def\By{4.8}
    \coordinate (B) at (\Bx, \By);

    \def\Cx{3.2}
    \def\Cy{3.2}
    \coordinate (C) at (\Cx, \Cy);

    \def\Dx{4.26}
    \def\Dy{1.6}
    \coordinate (D) at (\Dx, \Dy);

    \def\Ex{7}
    \def\Ey{1.6}
    \coordinate (E) at (\Ex, \Ey);

    \draw[thick] (A) -- (B);
    \draw[thick] (C) -- (D) -- (E);

    \draw[thick] (0 - 0.15, 8) -- (0 + 0.15, 8) node[left, xshift=-9pt] {\scriptsize $n$};

    \draw[thick] (\Ax - 0.15, \Ay) -- (\Ax + 0.15, \Ay) node[left, xshift=-9pt] {\scriptsize $\alpha \cdot n$};

    \draw[dashed, gray] (\Bx, 0) -- (\Bx, \By);
    \draw[thick] (\Bx, -0.15) -- (\Bx, 0.15) node[below, yshift=-7pt, xshift=-7pt] {\scriptsize $2(1-\alpha) n$};
    \draw[dashed, gray] (0, \By) -- (\Bx, \By);
    \draw[thick] (-0.15, \By) -- (0.15, \By) node[left, xshift=-9pt] {\scriptsize $(2 \alpha - 1) n$};

    \draw[thick, darkgreen] (0, 4) -- (7, 4);
    \draw[thick] (-0.15, 4) -- (0.15, 4) node[left, xshift=-9pt] {\scriptsize $n / 2$};

    \draw[dashed, gray] (0, \Cy) -- (\Cx, \Cy);
    \draw[thick] (-0.15, \Cy) -- (0.15, \Cy) node[left, xshift=-9pt] {\scriptsize $(3 \alpha - 2) n$};

    \draw[dashed, gray] (\Dx, 0) -- (\Dx, \Dy);
    \draw[thick] (\Dx, -0.15) -- (\Dx, 0.15) node[below, yshift=-7pt, xshift=5pt] {\scriptsize $\frac{2 \alpha}{3} n$};
    \draw[dashed, gray] (0, \Dy) -- (\Dx, \Dy);
    \draw[thick] (-0.15, \Dy) -- (0.15, \Dy) node[left, xshift=-9pt] {\scriptsize $(1-\alpha) n - 1$};

    \draw[dashed, gray] (\Ex, 0) -- (\Ex, \Ey);
    \draw[thick] (\Ex, -0.15) -- (\Ex, 0.15) node[below, yshift=-7pt] {\scriptsize $n$};

  \end{tikzpicture}
  \caption{
    \centering
    Authenticated Channels
  }
  \label{fig:auth-smoothness-curve}
  \end{subfigure}
  \caption{
    This figure shows smoothness under global predictions for the non-authenticated and authenticated settings (obtained for $\alpha = 0.8$).
    In each plot, the black curve is the algorithm's smoothness, i.e., the resilience with respect to the error $\eta$, the red curve and the dashed red region are impossible regions, and the dark green horizontal line is the optimal resilience of a prediction-free algorithm.
    Observe that the resilience has no sudden jumps in the range of almost accurate predictions.
    The descending curves show great tolerance to prediction errors and allow for a clear supremacy of our algorithms compared to standard prediction-free ones on a wide range of imperfect predictions.
  }
  \label{fig:smoothness-curves}
\end{figure}

\noindent\textbf{Inaccurate predictions are still beneficial.} Our second main result of this paper is on the \emph{smoothness}~\cite{DBLP:conf/icml/LykourisV18} of our algorithms, i.e., how their resilience evolves when errors are gradually introduced in the prediction.
As noted by Boyar et al~\cite{GAwithPredictions}, this metric is of particular interest for distributed systems: when $n$ is large, the consistency metric loses practical value since perfectly predicting the behavior of every node becomes increasingly unlikely.
Following~\cite{BAwithPredictions}, we define an \emph{error function} $\eta \in \llbracket 0, n \rrbracket$.
Then, we derive resilience guarantees for our algorithms for each possible value of the error function in both non-authenticated and authenticated settings.
Our main result is a linear decrease of the resilience for good predictions: in the non-authenticated setting, each new error loses (in the worst case) one unit of resilience; the decline is halved in the authenticated setting as to new errors are needed to lose one unit of resilience.
Overall, our results show a clear supremacy of prediction-augmented algorithms compared to prediction-free ones, \emph{even for a wide range of imperfect predictions}; our results are summarized in~\autoref{fig:smoothness-curves}.


\noindent\textbf{Impossibility results for local predictions.} Our last contribution is a strong impossibility result in the case of local prediction, i.e., when each node can obtain its own input prediction (and therefore predictions can vary between two nodes).
We show that, in this local prediction framework, any algorithm with a consistency of $1/2$ must have a robustness of $0$. 
This result directly implies that local predictions do not provide better guarantees in the authenticated case, from a consistency and robustness perspective.
For the non-authenticated case, the following interesting question remains open for future work: Is it possible to obtain non-trivial consistency and robustness trade-offs for a consistency inside $[1/3, 1/2[$?

\section{Model}
\label{sec:model}

\noindent\textbf{Problem Setup.} We consider the fundamental problem of Byzantine agreement, where $n$ nodes in a distributed system are given a binary input value and need to agree on a common output value by communicating with each other.
Each node has a unique ID, let the IDs be $1, 2 \dots n$, for simplicity.
The underlying communication network is synchronous and fully connected.
We assume that the nodes all start the protocol at the same time.
We therefore can interpret the synchronous communication as communication in rounds, where in each round, a node can (1) send any sequence of bits to its neighbors, (2) receive messages from its neighbors, and (3) update its local state.

An (unknown) subset of nodes $F \subseteq \llbracket n \rrbracket$ are assumed to show (Byzantine) \emph{faults}, meaning they may deviate from the algorithm in an arbitrary (worst-case) manner.
Those faulty nodes may collude without limit to prevent the other nodes from reaching agreement.
We use $f$ to denote the number of faulty nodes in the system.
While $f$ is unknown, we assume that the nodes have access to a known upper-bound on $f$, called $t$; in this paper, we set $t$ so that the problem is simply impossible to solve if $f > t$.
We denote the subset of \emph{honest} nodes, i.e.,  non-faulty, by $H = \llbracket n \rrbracket \setminus F$.

\noindent\textbf{Authentication.} In this paper, we focus on two standard models, one with \emph{non-authenticated} communication channels and the other with \emph{authenticated} communication channels.
In the non-authenticated setting, the messages transmitted over the communication channels are in plain, non-encrypted text. 
A faulty node can therefore freely alter its communication history: claiming to have sent or received messages that never existed, denying actual messages, or presenting different histories to different nodes.
In the authenticated setting (PKI model) however, nodes can digitally sign their messages, and the adversary is assumed computationally bounded, making it infeasible to forge or alter signed messages.
The adversary can still read each message, but it cannot fabricate a false communication history.

\noindent\textbf{Prediction Model.} In addition to their standard input values, this paper assumes the nodes receive an additional input, the \emph{prediction} --- provided by, e.g., a monitoring system that detects suspicious behaviors --- indicating which nodes are honest.

\begin{definition}[Prediction]
  \label{def:prediction}
  A prediction is a set $P \subseteq \llbracket n \rrbracket$ which indicates the honest nodes in the system.
  Predictions are unreliable and may arbitrarily differ from the actual set of honest nodes $H$.
\end{definition}

We consider two prediction models: \emph{global} and \emph{local}.
In the \emph{global model}, all nodes receive the same prediction $P$ (in that case, we therefore omit the node subscript).
This captures, e.g., a centralized monitoring system that flags nodes for suspicious behaviors and broadcasts its assessment of which nodes are trustworthy.
In the \emph{local model}, each node receives its own prediction, computed independently; the predictions may differ arbitrarily across nodes (same model as in~\cite{BAwithPredictions}).

For the most part of this paper (\autoref{sec:non-auth} and \autoref{sec:auth}), we focus on the global model; the definitions below therefore concern global predictions only.
We present impossibility results for local predictions in \autoref{sec:localPredictions} where we define analogous definitions for local predictions.

Our goal is to design synchronous algorithms that use predictions to solve the \emph{Byzantine agreement problem}.
We now formally define such algorithms.

\begin{definition}[prediction-augmented Byzantine Agreement Problem]
  \label{def:pred-algo}
  Each node $i$ receives two inputs: (1) a \emph{binary value} $v_i$ and (2) a \emph{prediction} $P_i$.
  A prediction-augmented algorithm \alg solves the Byzantine agreement problem if it satisfies:
  \begin{itemize}
  \item
    \textbf{Termination:} Every honest node terminates in finite time.
  \item
    \textbf{Agreement:} All honest nodes terminate with the same output bit.
  \item
    \textbf{Validity:} The output of every honest node is the input value of some honest node.
  \end{itemize}
\end{definition}



\noindent\textbf{Evaluation Metrics.} In this paper, we evaluate algorithms using the standard \emph{resilience} metric: the maximum number of faulty nodes the algorithm can tolerate.
Without predictions, the optimal resilience bounds are well-known: $\lceil n/3\rceil - 1$ faulty nodes without authentication, and $\lceil n/2\rceil - 1$ with authentication.

We study how these bounds improve when nodes have access to predictions.
Following the prediction-augmented algorithms literature~\cite{DBLP:conf/icml/LykourisV18}, we use three metrics defined hereafter: \emph{consistency}, \emph{robustness}, and \emph{smoothness}.
Consistency measures resilience under correct predictions:

\begin{definition}[Consistency]
  \label{def:consistency}
  Let $\alpha \in [0, 1]$.
  A prediction-augmented algorithm \alg is $\alpha$-consistent if, for every configuration with at most $\alpha \cdot n$ faulty nodes, there exists a prediction assignment under which \alg reaches agreement.
\end{definition}

Robustness measures resilience regardless of the predictions:

\begin{definition}[Robustness]
  \label{def:robustness}
  Let $\beta \in [0, 1]$.
  A prediction-augmented algorithm \alg is $\beta$-robust if, for every configuration with at most $\beta \cdot n$ faulty nodes and every prediction assignment, \alg reaches agreement.
\end{definition}

Consistency and robustness are two extremes: the former assumes perfect predictions, the latter assumes nothing about them.
Smoothness interpolates between the two, capturing how resilience degrades as prediction quality worsens.
Smoothness is parameterized by an error function $\eta$ that measures the quality of the prediction.

\begin{definition}[Error Function]
  \label{def:error-function}
  Let $H$ be the set of honest nodes and $P$ a global prediction.
  The \emph{error} of $P$ is:
  \begin{align*}
    \eta(H, P) = |H \setminus P| + |P \setminus H|
  \end{align*}
  That is, $\eta(H, P)$ counts all mispredictions: nodes in $H$ but not $P$ (honest but wrongly predicted faulty) and nodes in $P$ but not $H$ (faulty but predicted honest).
\end{definition}

\begin{definition}[Smoothness]
  \label{def:smoothness}
  The \emph{smoothness} of algorithm \alg is the function $s : \llbracket 0, n \rrbracket \to \mathbb{N}$ where $s(i)$ is the maximum number of faulty nodes \alg can tolerate when the prediction error satisfies $\eta(H, P) = i$.
\end{definition}


\noindent\textbf{Notations.} Let $i$ and $j$ be two integers with $i \le j$.
By $\llbracket i, j \rrbracket$ we refer to the set $\{i, i+1 \dots j-1, j\}$, by $\llbracket i \rrbracket$ we refer to the set $\{1, 2 \dots i\}$.
Let $i \in \mathbb{R}$ be a real number, we write $\lfloor i \rfloor$ to refer to the \emph{inferior integer part} of $i$, that is, the largest integer number smaller than $i$; likewise, we write $\lceil i \rceil$ to refer to the \emph{superior integer part} of $i$, which is the smallest integer number larger than $i$.
For convenience, we extend the range notations to real numbers: let $i, j \in \mathbb{R}$ with $i \le j$, $\llbracket i, j \rrbracket$ refers to the set $\{\lceil i \rceil, \lceil i \rceil + 1 \dots \lfloor j \rfloor -1, \lfloor j \rfloor\}$ and $\llbracket i \rrbracket$ refers to the set $\{1, 2 \dots \lfloor i \rfloor -1, \lfloor i \rfloor\}$.

\section{Reaching Agreement with Non-Authenticated Channels}
\label{sec:non-auth}

This section introduces our prediction-augmented algorithm $\globalg$ (\autoref{alg:ba}) for non-authenticated channels and global predictions.
We prove that our prediction-augmented algorithms admits a trade-off between consistency and robustness dependent on an input parameter $\alpha \in [1/3, 1]$ (\autoref{thm:consistency-robustness}) and establish the smoothness curve of $\globalg$ as shown in~\autoref{fig:smoothness-curve}  and \autoref{thm:smoothness-global-predictions}.

\begin{algorithm}[tbh]
  \caption{$\globalg$($v_x$, $P$, $\alpha$)}\label{alg:ba}
  \tcp{This algorithm runs in node $x$}
  \KwIn{the input bit $v_x$, prediction set $P \subseteq \llbracket n \rrbracket$, trust parameter $\alpha \in [1/3,1]$}

  \tcp{Initialize $L$ to the predicted set}
  $L \gets P$\\
  \tcp{Add nodes to $L$ if needed}
  $y \gets 1$\\
  \If{$|L| < \frac{3}{2} \cdot (1-\alpha) \cdot n - 1$}{
    \Repeat{$|L| \ge \frac{3}{2} \cdot (1-\alpha) \cdot n - 1$}{
      \If{$y \notin L$}{
        $L \gets L \cup \{y\}$\\
      }
      $y \gets y + 1$\\
    }
  }
  \tcp{Run the agreement algorithm}
  $t \gets \left\lceil|L| / 3 \right\rceil$\\
  \eIf{$x \in L$}{
    Run the Phase King algorithm with input bit $v_x$ and with the other nodes in $L$\\
    Once an output value was found, broadcast it\\
    \KwOut{the output of the Phase King algorithm}
  }
  {
    Wait the number of rounds taken by the Phase King algorithm with $|L|$ nodes and $t-1$ faulty nodes to terminate\\
    \KwOut{a value received at least $|L| - t + 1$ times, otherwise output the initial input bit}
  }
\end{algorithm}

$\globalg$ (\autoref{alg:ba}) proceeds in two phases.
In the first phase, all honest nodes agree on a common subset $L \subseteq \llbracket n \rrbracket$ based on the prediction.
In the second phase, they run a standard Byzantine agreement algorithm restricted to active nodes in $L$; nodes outside $L$ are passive and simply adopt the decided value.
Our prediction-augmented algorithm uses the classic Phase King protocol from~\cite{PhaseKing}, note that any algorithm that also reaches agreement against $\lceil n / 3\rceil - 1$ faulty nodes or fewer can be used instead.

\noindent\textbf{Controlling the number of active nodes.} Ideally, $L$ would equal the set of honest nodes $H$, so each node initializes $L = P$.
The strategy that relies solely on $P$ is not robust: if $P$ contains too many faulty nodes, agreement cannot be guaranteed.
To address this, we enforce that $L$ has a minimum size of $\lceil 3(1-\alpha)/2 \cdot n \rceil$: if $P$ does not already satisfy this constraint, the algorithm deterministically adds nodes into $L$ until it reached the minimum size.
This simple lower-bound on the number of active nodes guarantees a trade-off between consistency and robustness:

\begin{restatable}[Consistency-Robustness]{theorem}{thmConsistencyRobustness}
  \label{thm:consistency-robustness}
  $\globalg$ (\autoref{alg:ba}) is $\alpha$-consistent and $\left(\frac{1-\alpha}{2} - 1/n\right)$-robust.
\end{restatable}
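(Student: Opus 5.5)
The plan is to reduce both claims to one structural lemma about the active set $L$. We then check it once with a correct prediction (consistency) and once with an arbitrary prediction (robustness).

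\emph{Structural lemma.} Because the prediction is global and the completion step is deterministic, every honest node computes the same set $L$ and the same $t=\lceil |L|/3\rceil$. Let $f_L = |F\cap L|$ be the number of faulty nodes inside $L$. We will show that whenever $f_L \le t-1$, i.e.\ $3f_L < |L|$, all three properties of \autoref{def:pred-algo} hold:
\begin{itemize}
\item Inside $L$, Phase King with $|L|$ participants tolerates $f_L \le \lceil |L|/3\rceil - 1$ faults. So the honest nodes of $L$ terminate within the fixed round bound and agree on a common bit $b$. By Phase King validity, $b$ is the input of some honest node of $L$.
\item Every honest node of $L$ then broadcasts $b$. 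A passive honest node therefore receives $b$ from at least $|L|-f_L \ge |L|-t+1$ nodes.
\item A different value $b'$ can reach a passive node at most $f_L \le t-1$ times. We need $t-1 < |L|-t+1$, i.e.\ $2t < |L|+2$. This follows from $t=\lceil |L|/3\rceil$ for all $|L|\ge 1$; the degenerate case $|L|\le 1$ is checked separately (it gives $f_L=0$ and the unique honest active node decides).
\item Hence every passive honest node outputs $b$, and all honest nodes agree on a valid value.
\end{itemize}

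\emph{Consistency.} Assume $f\le \alpha n$ and take the prediction $P=H$. Then $|P| = n-f \ge (1-\alpha)n$. There are two cases:
\begin{itemize}
\item If $|P| \ge \frac32(1-\alpha)n - 1$, then $L=P=H$ and $f_L=0$.
\item Otherwise the algorithm pads $L$ up to size about $\frac32(1-\alpha)n$. At most $\frac32(1-\alpha)n - |H| \le \frac12(1-\alpha)n$ added nodes can be faulty, while at least $|H|\ge(1-\alpha)n$ nodes of $L$ are honest. So $f_L \le \frac13|L|$ roughly. We will verify this is strictly below $|L|/3$ after accounting for the ``$-1$'' in the threshold and the integer rounding: the loop stops at the first size $\ge \frac32(1-\alpha)n-1$, so the overshoot is at most one.
\end{itemize}
In either case the structural lemma applies.

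\emph{Robustness.} Assume $f \le \left(\frac{1-\alpha}{2}-\frac1n\right)n = \frac{1-\alpha}{2}n - 1$ and let $P$ be arbitrary. Then $|L| \ge \frac32(1-\alpha)n - 1$, and $f_L \le f \le \frac{1-\alpha}{2}n - 1 = \frac13\left(\frac32(1-\alpha)n - 3\right) < \frac{|L|}{3}$. So $f_L \le t-1$ and the structural lemma applies. Note that a large $P$ stuffed with faulty nodes is harmless here, because the bound uses only the total $f$.

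The main obstacle is the integer bookkeeping: the ceilings in $t$, the $-1$ in the padding threshold, and the borderline of the consistency padding case. There, honest nodes make up only about two thirds of $L$, so strictness of $3f_L<|L|$ is tight. I would handle it first by writing $|L| = \lceil \frac32(1-\alpha)n - 1\rceil$ and $|H|\ge \lceil(1-\alpha)n\rceil$ explicitly. If strictness fails at the boundary, I would use the freedom in the definition of consistency: one may choose the prediction. A padded $P \supseteq H$ of the right size, or simply $P=H$ when $|H|$ exceeds the threshold, avoids the padding issue.
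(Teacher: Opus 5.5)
Your proposal is correct and follows the paper's proof. Like the paper, you reduce everything to Phase King succeeding on the common set $L$ when $3|F\cap L|<|L|$, take $P=H$ for consistency, and combine $|L|\ge\frac32(1-\alpha)n-1$ with $|F\cap L|\le f\le\frac{1-\alpha}{2}n-1$ for robustness; your explicit argument for the passive nodes is a detail the paper leaves implicit.

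The consistency step you left open closes without any fallback, as in the paper. The loop stops at the first integer $\ge\frac32(1-\alpha)n-1$, so the overshoot is strictly less than one, not merely at most one (at most one would only give $3|F\cap L|\le|L|$). Hence $|L|<\frac32(1-\alpha)n$, and since all padded nodes are faulty, $|F\cap L|=|L|-|H|<\frac12(1-\alpha)n\le\frac12|H|$, which gives $3|F\cap L|<|L|$.
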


\begin{proof}
  We first prove the claim on consistency: we assume that the number of faulty nodes $f$ is lower or equal to $\alpha \cdot n$ and that $P = H$.
  Since $|H| \ge (1-\alpha) \cdot n$, we deduce that \autoref{alg:ba} inserts at most $\lceil \frac{3}{2} \cdot (1 - \alpha) \cdot n \rceil - 1 - \lceil (1-\alpha) \cdot n \rceil$ nodes into $P$ to obtain $L$.
  We obtain:
  \begin{align*}
    \left\lceil \frac{3}{2} \cdot (1 - \alpha) \cdot n \right\rceil - 1 - \left\lceil (1-\alpha) \cdot n \right\rceil
    \quad < \quad  \frac{3}{2} \cdot (1 - \alpha) \cdot n - (1 - \alpha) \cdot n
    \quad = \quad \frac{1}{2} \cdot (1 - \alpha) \cdot n
  \end{align*}
  where the first inequality holds since, for all real number $r$, we have $\lceil r \rceil - 1 < r$.
  This implies that $L$ contains strictly less than a third of faulty nodes.
  It is known that the Phase King algorithm reaches agreement when strictly fewer than $\lceil |L| / 3 \rceil$ nodes are faulty and the claimed consistency follows.

  We now prove the claimed robustness.
  We assume that the number of faulty nodes $f$ is lower or equal to $\lfloor \frac{1}{2} \cdot (1 - \alpha) \cdot n\rfloor - 1$, the predicted set $P$ is arbitrary.
  We now use that $|L| \ge \frac{3}{2} \cdot (1 - \alpha) \cdot n - 1$ and lower-bound the fraction of faulty nodes in $L$:
  \begin{align*}
    |F \cap L| \ \le \ f \ \le \ \left\lfloor \frac{1}{2} \cdot (1 - \alpha) \cdot n\right\rfloor - 1 \ \le \ \frac{1}{3} \cdot \left\lfloor \frac{3}{2} \cdot (1 - \alpha) \cdot n\right\rfloor - 1 \ \le \ \frac{|L|}{3} - \frac{2}{3} \ < \ \frac{|L|}{3}.
  \end{align*}
  This proves that the Phase King algorithm reaches agreement in $L$ and the claim follows.
\end{proof}

The consistency measure implies quite an extreme requirement on the predictor's accuracy and may thus rarely apply, especially when the number of nodes is large and a perfect identification of each node is a challenging task.
We now answer the question: How resilient is $\globalg$ when the predictor is \emph{almost} accurate, say, $99 \%$ of the nodes are correctly identified?
We answer this question in general, for any fraction of the nodes, by deriving a smoothness curve:

\begin{restatable}[Smoothness]{theorem}{thmSmoothnessGlobalPredictions}
  \label{thm:smoothness-global-predictions}
  Let $\alpha \in [1/3, 1]$.
  We consider the function $s$ defined as follows.
  \[
    s(\eta) = \begin{cases}
      \alpha \cdot n - \eta& \text{if } \eta \in \llbracket 0, (1-\alpha) \cdot n - 1\rrbracket, \\[6pt]
      n - 2\eta - 1& \text{if } \eta \in \left\llbracket (1-\alpha) \cdot n - 1, \frac{1+\alpha}{4} \cdot n + 1 \right\rrbracket, \\[6pt]
      \frac{1-\alpha}{2} \cdot n - 1 &\text{if } \eta \in \left\llbracket \frac{1+\alpha}{4} \cdot n + 1, n \right\rrbracket
    \end{cases}
  \]
  Assuming an error of $\eta$, $\globalg$ (\autoref{alg:ba}) with trust parameter $\alpha$ reaches agreement against $s(\eta)$ faulty nodes or fewer.
\end{restatable}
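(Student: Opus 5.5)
The plan is to reduce everything to the single condition that drives the proof of \autoref{thm:consistency-robustness}. All honest nodes compute the same $L$, since $P$ is global and the padding is deterministic. So it suffices to show that $3\,|F \cap L| < |L|$ whenever $f \le s(\eta)$. Under that condition, the Phase King algorithm run inside $L$ reaches agreement and validity. The passive honest nodes outside $L$ then receive the common output from at least $|L| - |F\cap L|$ honest nodes. I expect this to reach the threshold $|L|-t+1$ up to the same rounding bookkeeping as in \autoref{thm:consistency-robustness}, and no Byzantine value can reach that threshold, so they adopt the correct value.

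To parametrize the error, write $a = |P \setminus H|$ (faulty nodes predicted honest) and $b = |H \setminus P|$, so that $\eta = a + b$, $a \le \min(\eta, f)$, and $|P| = n - f - b + a$. Let $T = \frac{3}{2}(1-\alpha)n - 1$ be the padding threshold. I would split into two cases according to whether padding occurs.

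\emph{Case 1, no padding ($|P| \ge T$).} Here $L = P$ and $|F\cap L| = a$. The condition $3a < n - f - b + a$ becomes $f + \eta + a < n$. In the worst case $a = \min(\eta,f)$, so this needs $f + 2\eta < n$ (i.e.\ $f \le n - 2\eta - 1$) or $2f + \eta < n$.

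\emph{Case 2, padding ($|P| < T$).} Here $|L| \approx T$, and the faulty nodes in $L$ number at most $\min\bigl(f,\ a + (|L| - |P|)\bigr)$. The first bound, $f$, gives success whenever $f < T/3$, which is exactly the robustness regime already handled in \autoref{thm:consistency-robustness}. The second bound equals $|L| - n + f + b$. Substituting it, the requirement $3(|L|-n+f+b) < |L|$ reduces to $f + b < n - \tfrac{2}{3}|L| \approx \alpha n + \tfrac23$. Since $b \le \eta$, this holds whenever $f \le \alpha n - \eta$.

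Finally, I would check each regime of $s$ against both cases.
\begin{itemize}
\item For $\eta \le (1-\alpha)n - 1$ and $f = \alpha n - \eta$: Case 2 holds directly. Case 1 holds because $f + 2\eta = \alpha n + \eta < n$.
\item For the middle range with $f = n - 2\eta - 1$: Case 1 holds with $a = \eta$. Case 2 holds because $f + \eta = n - \eta - 1 \le \alpha n$ exactly when $\eta \ge (1-\alpha)n - 1$.
\item For the last range, $f \le \frac{1-\alpha}{2}n - 1$, which is covered by the robustness part of \autoref{thm:consistency-robustness}.
\end{itemize}

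The main obstacle I expect is the rounding at the boundaries. The padding loop yields $|L| = \lceil T\rceil$ rather than $T$, and the breakpoints $(1-\alpha)n - 1$ and $\frac{1+\alpha}{4}n + 1$ are where the three formulas meet. There I would redo the inequalities with explicit floors and ceilings, using $\lceil r\rceil - 1 < r$ as in the earlier proof, to make sure the strict inequality $3|F\cap L| < |L|$ survives.
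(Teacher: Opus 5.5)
Your argument is correct. It reaches the result by a different decomposition than the paper, though the key bounds are the same.

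\textbf{How the paper organizes it.} The paper splits by the three ranges of $\eta$. It distinguishes $L=P$ from $P\subsetneq L$ only in its middle case. There it uses exactly your two bounds:
\begin{itemize}
\item $|H\cap P|>2a$ when $L=P$;
\item $|H\cap P|\ge(1-\alpha)n$ together with $|L|\le\lceil T\rceil$ when padding occurs, which amounts to your $|F\cap L|\le|L|-|H\cap P|$.
\end{itemize}

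\textbf{How yours differs.} You make the padding split first. You reduce each side to one inequality: $f+\eta+a<n$ without padding, and $f+b<n-\frac{2}{3}|L|$ with padding. You then substitute each piece of $s$. This shows directly where the breakpoint $(1-\alpha)n-1$ comes from.

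\textbf{What your route buys.} It also repairs the paper's first case. That case only establishes $|H\cap L|>2a$, i.e.\ it compares the honest nodes of $L$ with the faulty nodes inside $P$. It never bounds faulty nodes inserted by padding. Already for $P=H$, every padded node is faulty while $a=0$, so that comparison says nothing. Your padded-case inequality handles exactly this.

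\textbf{Rounding.} The rounding you flag is harmless. You only need $|L|=\lceil T\rceil<T+1=\frac{3}{2}(1-\alpha)n$, which gives $n-\frac{2}{3}|L|>\alpha n$ strictly. Moreover, $|F\cap L|<|L|/3$ is equivalent to $|F\cap L|\le t-1$. So the honest members of $L$ supply exactly the $|L|-t+1$ copies the passive nodes need.

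\textbf{One assumption to state.} Say explicitly that passive nodes count only messages from members of $L$. Otherwise faulty nodes outside $L$ could also push the wrong bit over that threshold. The paper leaves this implicit as well.
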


\begin{proof}[Proof Idea]
  In the following, we present the general proof idea. The full proof of this theorem can be found in Section \ref{sec:proof-smoothness-global-predictions}.

\noindent\textbf{No resilience drop.} Interestingly, there is no sudden drop in the resilience as the error increases, neither a critical quantity of mispredictions triggering a drastic phase transition. 
Instead, the curve is smooth: as long as the number of errors remains within a constant fraction of the nodes (from $0$ to $(1-\alpha) \cdot n - 1$), each new misprediction incurs at worst the loss of one unit of resilience.
To explain this, we position ourselves on a point of the smoothness curve somewhere between $\eta = 0$ to $\eta = (1-\alpha) \cdot n - 2$.
Let $C$ and $P$ be a configuration and a prediction that fit our current error.
Now make one of the predicted nodes faulty.
Notice that doing this has two simultaneous effects: both the error $\eta$ and the number of faulty nodes $f$ were incremented by one.
Clearly, $\globalg$ now cannot guarantee the same resilience as before: the faulty node we added into $P$ will also be an active node in $L$ and, as $\globalg$ was already at its resilience limit before, the faulty nodes are now numerous enough to prevent agreement.
But what happens if we slightly reduce the number of faulty nodes?
We now turn a faulty node into a honest node.
By doing so, \emph{the error must change}: if the node is in $P$ then $\eta$ was decremented by one, otherwise $\eta$ is incremented.
The key argument of the proof is that, in order to keep a constant error, \emph{not only the faulty nodes must change, but also the prediction}; in our case, we must first convert the faulty node (outside of $P$) into a honest node, and then insert that node inside of $P$.
Although it is now accurately predicted, this new honest node is not enough to balance our initial insertion of a faulty node; this is because, in the non-authenticated case, one requires at least two honest nodes for each faulty node to reach agreement.
Thus, we turn one other faulty node into a honest node (decrementing $f$ by one overall) which now guarantees agreement.

\noindent\textbf{The phase transition.} Interestingly, the smoothness curve is not simply linear.
Past an error of $\eta = (1-\alpha) \cdot n - 1$, the smoothness curve becomes steeper: in the worst case, each new misprediction now incurs the loss of two units of resilience.
Our impossibility result on smoothness~\autoref{thm:impossibility-smoothness-appendix} shows that this transition is not only \emph{necessary} (at the transition point, our smoothness curve touches the impossibility zone) but is also \emph{optimal} for a constant fraction of the errors (see~\autoref{fig:smoothness-curve}).
The transition occurs when mispredictions are numerous enough to consummate the initial advantage that comes with perfect predictions.
At the consistency point ($\eta = 0$), every predicted node is honest.
Then as the error grows, these correctly predicted honest nodes are replaced by faulty ones (in the worst case), eroding that advantage.
By the transition point, the first phase (slope $-1$) has consumed this advantage entirely: all nodes that were originally honest and well-predicted may now be faulty.
Beyond this point, the honest-to-faulty ratio in the \emph{predicted set} $P$ is itself at the edge of what agreement requires --- not only in the active set $L$.
Since $P$ no longer has enough honest nodes to absorb an additional faulty insertion from $L \setminus P$, one extra faulty node must be removed to restore the balance, which explains the steeper slope of $-2$.\qedhere
\end{proof}

\section{Impossibility Results with Non-Authenticated Channels}
\label{sec:non-auth-impossibility}
This sections shows impossibility results for any prediction-augmented algorithm: we prove that our algorithm $\globalg$ reaches the optimal consistency-robustness trade-off (\autoref{thm:impossibility-consistency-robustness}) and derive impossibility results for the smoothness metric (\autoref{thm:impossibility-smoothness-appendix}).
\begin{restatable}[Impossibility on Consistency and Robustness]{theorem}{thmImpossibilityConsistencyRobustness}
  \label{thm:impossibility-consistency-robustness}
  Let $\alpha \in [0,1[$.
  Any prediction-augmented algorithm for non-authenticated channels with a consistency greater or equal to $\alpha$ has a robustness strictly lower than $\frac{1-\alpha}{2}$.
  This result is independent from the chosen prediction system.
\end{restatable}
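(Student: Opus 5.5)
We argue by contradiction with a three-world indistinguishability construction in the style of the classical $n > 3t$ lower bound~\cite{PeaseShostackLamport}. Suppose some algorithm \alg is $\alpha$-consistent and $\beta$-robust with $\beta \ge \frac{1-\alpha}{2}$. Since the prediction is global, all nodes receive the same set in any execution, and we may fix it. We partition $\llbracket n \rrbracket$ into three groups $A$, $B$, $C$ with $|A| = |B| \le \frac{1-\alpha}{2} n$ and $|C| \le \alpha n$; this is possible since $2 \cdot \frac{1-\alpha}{2} n + \alpha n = n$, with rounding handled by putting any leftover node into $C$ or by using floors/ceilings so that each bound still holds. Consistency is only an existential statement: for the configuration in which $C$ is faulty, there is \emph{some} prediction $P^\ast$ under which \alg succeeds. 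We take $P^\ast$ to be exactly that prediction and feed it to every node in all three scenarios. Robustness then applies to the other two scenarios with the same $P^\ast$, because it holds for every prediction. This is the point of the clause ``independent from the chosen prediction system.''

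The three scenarios are as follows. In $S_1$, the nodes of $A$ are faulty ($|A| \le \beta n$), the nodes of $B \cup C$ are honest with input $0$, and the prediction is $P^\ast$. Robustness and validity force every honest node to output $0$. In $S_2$, the nodes of $B$ are faulty, the nodes of $A \cup C$ are honest with input $1$, and again robustness and validity force output $1$. In $S_3$, the nodes of $C$ are faulty ($|C| \le \alpha n$), $A$ has input $1$, $B$ has input $0$, and the prediction is $P^\ast$. By the choice of $P^\ast$, \alg reaches agreement in $S_3$.

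The faulty nodes simulate the standard ``split-brain'' behaviour. In $S_3$, the $C$ nodes behave toward $A$ exactly as honest $C$ nodes with input $1$ behave in $S_2$, and toward $B$ exactly as honest $C$ nodes with input $0$ behave in $S_1$. Symmetrically, in $S_1$ the $A$ nodes behave toward $B$ as in $S_3$ and toward $C$ as in $S_2$, and analogously for $B$ in $S_2$. Without authentication, the faulty nodes can simulate any history of the other groups. It is cleanest to realise this with the standard hexagon/covering construction: six copies of the groups $A_0, B_0, C_0, A_1, B_1, C_1$ are wired in a ring, and each scenario is a window of two adjacent honest groups. Then $B$ cannot distinguish $S_1$ from $S_3$, so it outputs $0$ in $S_3$, and $A$ cannot distinguish $S_2$ from $S_3$, so it outputs $1$ in $S_3$. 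This contradicts agreement in $S_3$.

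The main obstacle is making the covering argument rigorous when nodes carry IDs and a global prediction. Every copy of a node must receive the same prediction $P^\ast$ and the same ID-consistent view, so $P^\ast$ must be chosen once, from $S_3$'s consistency guarantee, before the hexagon is built. That is legitimate because the hexagon only fixes inputs and wiring. A second, minor obstacle is the rounding in the sizes of $A$, $B$, $C$: we need $|A|, |B| \le \beta n$ and $|C| \le \alpha n$ to hold simultaneously. This is exactly why the strict bound $\frac{1-\alpha}{2}$ is the threshold and why the positive result only achieves $\frac{1-\alpha}{2} - \frac1n$. I would handle it by taking $|A| = |B| = \lceil \frac{1-\alpha}{2} n\rceil$ when $\beta n$ allows it, and otherwise checking that $\lfloor\cdot\rfloor$ choices leave $|C| \le \alpha n$.
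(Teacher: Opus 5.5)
You follow the paper's proof. Your consistency world with the $\alpha n$ nodes of $C$ faulty supplies $P^\ast$, which is reused in the two worlds where $A$ or $B$ is faulty. Robustness plus validity then forces opposite outputs on $A$ and $B$, and indistinguishability carries them into the consistency world. You phrase the contradiction as an agreement violation and the paper as a validity violation, which is equivalent.

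Your covering construction is actually more careful than the paper, whose faulty group just ``runs \alg with input $1$''. Taken literally, that does not make $A$'s views in the two worlds coincide. To get your three windows as adjacent pairs, order the ring cyclically as $A_1, B_0, C_0, A_0, B_1, C_1$. In that ring the faulty $A$ in $S_1$ plays $A_0$ (input $0$) toward $C$, not ``$A$ as in $S_2$'' as you wrote, but nothing depends on that.

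\textbf{The rounding step fails.} The three groups must cover all $n$ nodes with $|A|,|B|\le\lfloor\tfrac{1-\alpha}{2}n\rfloor$ and $|C|\le\lfloor\alpha n\rfloor$. This is impossible in general: for $n=10$ and $\alpha=\tfrac12$ these bounds allow only $2+2+5=9$ nodes. No choice of floors or ceilings fixes it, because the statement is false in that case. $\globalg$ (\autoref{alg:ba}) with $\alpha=\tfrac12$ and $n=10$ enforces $|L|\ge 7$. It therefore tolerates any $2=\lfloor\tfrac{1-\alpha}{2}n\rfloor$ faults under every prediction, so it is $\tfrac14$-robust in the sense of \autoref{def:robustness}. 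By \autoref{thm:consistency-robustness} it is also $\tfrac12$-consistent.

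You therefore need the extra hypothesis that $\tfrac{1-\alpha}{2}n$ (and hence $\alpha n$) is an integer. This is exactly what the paper tacitly assumes when it sets $|A|=|B|=\tfrac{1-\alpha}{2}n$ and $|C|=\alpha n$. Under that hypothesis your argument is complete.
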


\begin{proof}[Proof idea.]
We derive this impossibility result with an \emph{indistinguishability argument}~\cite{PhaseKing}: assuming an algorithm is $\alpha$-consistent, we derive three configurations and a strategy from the faulty nodes so that, in at least one configuration, either, validity, agreement (see \autoref{def:pred-algo}) or robustness requirements do not hold. The full proof of this theorem can be found in Section~\ref{sec:proof-impossibility-consistency-robustness}. \qedhere

\end{proof}


The smoothness metric is  inherently interesting because prediction systems are not perfect and will mostly act between the bounds of consistency and robustness. 
We here show variable impossibility results that limit any algorithm that satisfies optimal consistency robustness tradeoff.

\pgfmathsetmacro{\coeff}{0.66}
\begin{figure}
  \begin{subfigure}{0.32\textwidth}
    \centering
    \begin{tikzpicture}[
      edge/.style={->, draw=black, thick, -Stealth},
      scale=0.7
      ]
      
      \node[cor={\coeff*2.0}{\coeff*1.2}] (A) at (-1.2, 2) {\small $\bm{(A)}$ : $\bm{0}$};
      \node[cor={\coeff*2.0}{\coeff*1.2}] (B) at (1.2, 2) {\small $\bm{(B)}$ : $\bm{1}$};
      \node[byz={\coeff*2.0}{\coeff*1.2}] (C) at (0, 0.5) {\small $\bm{(C)}$};
      \node[byz={\coeff*1.5}{\coeff*1.2}] (D) at (0, 4) {\small $\bm{(D)}$};

      \draw[edge, bend right=30] (C.east) to node [below right] {$\bm{1}$} (node cs:name=B, angle=-50);
      \draw[edge, bend left=30] (C.west) to node [below left] {$\bm{0}$} (node cs:name=A, angle=-130);

      \draw[edge, bend right=30] (D.west) to node [above left] {$\bm{0}$} (node cs:name=A, angle=130);
      \draw[edge, bend left=30] (D.east) to node [above right] {$\bm{1}$} (node cs:name=B, angle=50);

    \end{tikzpicture}
    \caption{\centering}
    \label{fig:1-a}
  \end{subfigure}
  \begin{subfigure}{0.32\textwidth}
    \centering
    \begin{tikzpicture}[
      edge/.style={->, draw=black, thick, -Stealth},
      scale=0.7
      ]
      
      \node[cor={\coeff*2.0}{\coeff*1.2}] (A) at (-1.2, 2) {\small $\bm{(A)}$ : $\bm{0}$};
      \node[byz={\coeff*2.0}{\coeff*1.2}] (B) at (1.2, 2) {\small $\bm{(B)}$ : $\bm{1}$};
      \node[cor={\coeff*2.0}{\coeff*1.2}] (C) at (0, 0.5) {\small $\bm{(C)}$ : $\bm{0}$};
      \node[byz={\coeff*1.5}{\coeff*1.2}] (D) at (0, 4) {\small $\bm{(D)}$};

      \draw[edge, bend right=30] (D.west) to node [above left] {$\bm{0}$} (node cs:name=A, angle=130);
      \draw[edge, bend left=30] (D.east) to node [above right] {$\bm{1}$} (node cs:name=B, angle=50);
    \end{tikzpicture}
    \caption{\centering}
    \label{fig:1-b}
  \end{subfigure}
  \begin{subfigure}{0.32\textwidth}
    \centering
    \begin{tikzpicture}[
      edge/.style={->, draw=black, thick, -Stealth},
      scale=0.7
      ]
      
      \node[byz={\coeff*2.0}{\coeff*1.2}] (A) at (-1.2, 2) {\small $\bm{(A)}$ : $\bm{0}$};
      \node[cor={\coeff*2.0}{\coeff*1.2}] (B) at (1.2, 2) {\small $\bm{(B)}$ : $\bm{1}$};
      \node[cor={\coeff*2.0}{\coeff*1.2}] (C) at (0, 0.5) {\small $\bm{(C)}$ : $\bm{1}$};
      \node[byz={\coeff*1.5}{\coeff*1.2}] (D) at (0, 4) {\small $\bm{(D)}$};

      \draw[edge, bend right=30] (D.west) to node [above left] {$\bm{0}$} (node cs:name=A, angle=130);
      \draw[edge, bend left=30] (D.east) to node [above right] {$\bm{1}$} (node cs:name=B, angle=50);
    \end{tikzpicture}
    \caption{\centering}
    \label{fig:1-c}
  \end{subfigure}
  \caption{
    In each configuration the nodes are given the same global prediction $P = A \cup B \cup C$.
    A missing arrow between a faulty and a honest node means, the behavior of the faulty node towards that honest node is not of importance for the proof. The sizes of the subsets are the following: $|A|=|B|=|C|= \eta$ and $|D|=n - 3 \cdot\eta$
  }
  \label{fig:smoothness-configs}
\end{figure}

\begin{restatable}[Impossibility on Smoothness]{theorem}{thmImpossibilitySmoothnessAppendix}
	\label{thm:impossibility-smoothness-appendix}
	Let $\alpha \in [1/3, 1]$, let \alg be an algorithm that is $\alpha$-consistent.
	We consider the function $\bar{s}$ defined as follows.
	\[
		\bar{s}(\eta) = \left\{\begin{array}{l@{\quad}l@{\quad}r}
			\alpha \cdot n + 1                                 & if \eta \in \llbracket 0, \tfrac{(1-\alpha)}{2}\cdot n \rrbracket,                            & (1) \\[4pt]
			n - 2 \cdot \eta                                   & \text{if } \eta \in \llbracket \tfrac{(1-\alpha)}{2}\cdot n, \tfrac{1}{3} \cdot n\rrbracket, & (2) \\[4pt]
			\tfrac{1}{2} \cdot n - \tfrac{1}{2} \cdot \eta - 2 & \text{if } \eta \in \llbracket \tfrac{1}{3} \cdot n, \alpha \cdot n \rrbracket     & (3)
		\end{array}\right.
	\]
	The smoothness function of \alg is strictly lower than the piece $(1)$ and $(2)$ of $\bar{s}$; moreover, if the smoothness of \alg matches $(2)$ for $\eta \in \llbracket (1-\alpha)\cdot n, \frac{n}{3} \rrbracket$, then its smoothness cannot be above $(3)$.
\end{restatable}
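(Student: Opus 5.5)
We prove each piece by an indistinguishability argument built on the three configurations of \autoref{fig:smoothness-configs}. In each of them all nodes receive the same global prediction $P = A \cup B \cup C$, with $|A|=|B|=|C|=\eta$ and $|D| = n-3\eta$. Configurations (a), (b), (c) differ only in which of $A,B,C$ are honest.

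\textbf{Piece (1).} Here $\alpha$-consistency already forbids tolerating $\alpha n+1$ faults. Any algorithm tolerating $\alpha n + 1$ faults with error $\eta$ would, for $\eta \le \frac{1-\alpha}{2} n$, give a configuration beating the general bound of \autoref{thm:impossibility-consistency-robustness}. The plan is to reuse that theorem's three configurations and make the prediction error exactly $\eta$. We do this by adding $\eta$ honest nodes to the complement of $P$ (or faulty nodes to $P$) while keeping the honest sets partitioned as in that proof. This should be mostly bookkeeping.

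\textbf{Piece (2).} This is the core, $\eta \in \llbracket \frac{1-\alpha}{2}n, \frac n3 \rrbracket$.
\begin{itemize}
\item In configuration (b), $A\cup C$ is honest with input $0$ and $B\cup D$ is faulty. The faulty set has size $n-2\eta$ and the error is $|B|+|D\cap P| = \eta$, since $D \cap P = \emptyset$.
\item Symmetrically, in (c) $B\cup C$ is honest with input $1$, $A\cup D$ is faulty, and the error is again $\eta$.
\item In (a), $A\cup B$ is honest with inputs $0$ and $1$, and $C\cup D$ is faulty. $C$ plays towards $A$ as in (b) and towards $B$ as in (c). $D$ behaves towards $A$ as in (b) and towards $B$ as in (c), i.e.\ it simulates the faulty $B$ (resp.\ $A$) together with $D$ of the other configurations.
\end{itemize}
The standard simulation then applies. $A$ cannot distinguish (a) from (b), and $B$ cannot distinguish (a) from (c). If the algorithm tolerated $n-2\eta$ faults at error $\eta$, validity would force $A$ to output $0$ in (b) and $B$ to output $1$ in (c), and hence the same outputs in (a).

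We then need (a) to be a configuration the algorithm must also handle. Its fault count is $n-2\eta$ and its error is $|C|=\eta$, so it also lies at error $\eta$ with $n-2\eta$ faults. This yields an agreement violation in (a), so $s(\eta) < n-2\eta$.

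The range constraint $\eta\le n/3$ is exactly what makes $|D|\ge 0$. The lower end of the range is where piece (2) starts to improve on piece (1), and that is where the consistency hypothesis enters. The main obstacle is checking that the non-authenticated faulty nodes can consistently fake the histories across the three scenarios, given that every node knows $P$ and hence knows that $C$ is "predicted honest". This is the ring/hexagon-style simulation, and we will argue it round by round.

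\textbf{Piece (3).} This is a conditional statement. Assume the smoothness equals $n-2\eta'$ on $\llbracket (1-\alpha)n, n/3\rrbracket$. For $\eta > n/3$ we build configurations with $|A|=|B|=\lfloor\frac{n-\eta}{2}\rfloor$-type sizes, shrinking $D$ to zero. We would try to show that an algorithm beating $\frac n2-\frac\eta2-2$ faults at error $\eta$, combined with its assumed behaviour at some error $\eta'\le n/3$, produces two configurations that honest nodes cannot distinguish but that force opposite outputs. The intended mechanism is that the assumed matching at $\eta'$ pins down the algorithm's decision whenever $P$ is mostly trusted. We then choose the sizes so that the fault count $\frac n2 - \frac\eta2 - 2$ balances the two constraints, and verify the arithmetic (the constant $-2$ absorbs roundings). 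We expect this piece to be the least certain part of the plan.
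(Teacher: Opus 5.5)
Your piece (2) is the paper's argument: the same three configurations with $P=A\cup B\cup C$, all at error $\eta$ with $n-2\eta$ faults. Two remarks on it: this piece needs no hypothesis on \alg at all, and since $P$ is a common input it does not interfere with the simulation. The other two pieces have genuine gaps.

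\textbf{Piece (1).} The premise ``$\alpha$-consistency already forbids tolerating $\alpha n+1$ faults'' is backwards: consistency is a guarantee, not a cap. Running Phase King on $L=P$ alone is $\alpha$-consistent for every $\alpha$ and tolerates up to $n-1$ faults at $\eta=0$. So piece (1) is false for a merely $\alpha$-consistent algorithm.

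The paper's proof opens by adding the hypothesis that \alg also attains the optimal robustness $\frac{1-\alpha}{2}n-1$, and the intended contradiction is with that robustness. Its construction takes $A,B$ honest of size $\frac{(1-\alpha)n-1}{2}$ and the remaining nodes faulty, with the error produced by faulty nodes $D\cup\{x\}$ placed in $P$. Success in this split configuration forces failure in a one-sided one. Your ``reuse the configurations of \autoref{thm:impossibility-consistency-robustness}'' is that construction, but you never state the robustness assumption.

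The theorem also cannot be used as a black box. With consistency $\alpha+\frac1n$ it only gives robustness $<\frac{1-\alpha}{2}-\frac{1}{2n}$, which does not contradict $\frac{1-\alpha}{2}-\frac1n$. Nor is the explicit count mere bookkeeping. With $\alpha n+1$ faulty nodes, the two honest halves have $(1-\alpha)n-1$ nodes in total. So some one-sided configuration has at least $\frac{(1-\alpha)n-1}{2}>\frac{1-\alpha}{2}n-1$ faults and is not covered by the assumed robustness. The paper's comparison of $\frac{(1-\alpha)n-1}{2}-1$ with $\frac{(1-\alpha)n}{2}-1$ glosses over exactly this, so you would have to confront it.

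\textbf{Piece (3).} This is not yet an argument, and your one concrete choice, shrinking $D$ to zero, goes the wrong way. Also, assuming the smoothness ``equals $n-2\eta'$'' contradicts piece (2); ``matches (2)'' must mean $n-2\eta'-1$.

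The paper moves one node of $C$ into $A$ and one into $B$, keeps $P=A'\cup B'\cup C'$, and lets $D$ switch roles:
\begin{itemize}
\item In the split configuration, $C'\cup D$ is faulty and $D\cap P=\emptyset$. The error is only $|C'|$, and the fault count lies strictly below $n-2|C'|-1$; this is what the one-node shift buys. Whenever $|C'|$ falls in the range covered by the hypothesis, \alg therefore succeeds there.
\item In the one-sided configurations, $D$ is honest with that side's input but lies outside $P$. The error is $|A'|+|D|>n/3$ while only $|A'|$ nodes are faulty. Indistinguishability from the split configuration then forces a validity violation in one of them.
\end{itemize}
Since $|D|=n-3|A'|+O(1)$, this gives $f=\frac{n-\eta}{2}+O(1)$, which is the slope $-\frac12$. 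So $D$ must grow with $\eta$: it supplies the error beyond $n/3$ without adding faults. The missing idea is this role switch of $D$: faulty and unpredicted in the split configuration, honest and unpredicted in the one-sided ones.
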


\begin{proof}[Proof idea.] In the following, we present a proof idea for this theorem. The full proof is presented in Section~\ref{sec:proof-impossibility-smoothness-appendix}.
The proof for each interval of $\bar{s}(\eta)$ works by creating indistinguishable initial configurations.
\autoref{fig:smoothness-configs} shows the specific configurations for interval (2) in \autoref{thm:impossibility-smoothness-appendix}.
The figure depicts different subsets of nodes as ovals.
Orange ovals with curly borders depict faulty nodes and blue ovals depict honest nodes.
The honest nodes further have their input bit written in the oval, whereas the arrows originating from faulty subsets describe the strategy of the faulty nodes. In this case they behave as if they have the specific input bit indicated on the arrow.

The proof for interval (2) works by creating two opposite configurations (b) and (c), where subsets $A$ and $B$ must output $0$ and $1$ respectively.
The impossibility is then created in configuration (a).
If the Byzantine nodes in all configurations have a common strategy, they can make subset $A$ believe configuration (a) and (b) are the same.
The same holds for subset $B$ with configurations (a) and (c).
Because of the indistinguishability subset $A$ must output $0$ and subset $B$ must output $1$ in (a).
This breaks the agreement condition.
From this we can see that at least one of the configurations must not work.
By carefully choosing the sizes of the subsets and the prediction set, we can create the configurations such that each of them has the same error $\eta$ and number of faults $f$.
We conclude that no algorithm can solve Byzantine agreement with this resilience for the specific error.
\end{proof}

\section{Byzantine Agreement with Local Predictions}
\label{sec:localPredictions}
So far, we assumed that the prediction is global, i.e., all nodes receive the same predicted set of honest nodes. However, in practice in decentralized applications, different information might be available to different nodes. Consider a monitoring tool in which predictions are produced from locally observed traffic. In this case, nodes would obtain different views of the system and compute different predictions. In the setting with local predictions, the prediction is no longer a single global set, but a collection of sets, one per node.

\begin{definition}[Local Prediction]
  \label{def:local-prediction}
  A local prediction is a vector $\mathbf{P} = (P_1, \dots, P_n)$ where for every node $i \in \llbracket n \rrbracket$, $P_i \subseteq \llbracket n \rrbracket$ is the prediction received by node $i$.
  The interpretation is that node $i$ receives $P_i$ as input and treats it as the predicted set of honest nodes.
\end{definition}

In contrast to the global case, two honest nodes may now receive different predictions and node do not know the predictions received by others.
Hence, we update the error function $\eta$.

\begin{definition}[Error Function for Local Predictions]
  \label{def:local-error}
  Let $H$ be the set of honest nodes in configuration $C$, and let $\mathbf{P} = (P_1, \dots, P_n)$ be a set of local predictions.
  The \emph{prediction error} is 
  \begin{align*}
    \eta(C,\mathbf{P}) = \sum_{i\in H} (|P_i \setminus H| + |H \setminus P_i|)
  \end{align*}
\end{definition}
That is, we measure the total number of mispredicted nodes across all honest nodes. Note that in the worst case, the error function is in order of $\Theta(n^2)$.


Next, we present the main impossibility result for local predictions. This result shows that, unlike for global predictions, there is little room to improve the resilience of algorithms under local predictions.

\begin{restatable}{theorem}{thmImpossibilityLocal}
  \label{thm:impossibility-local}
  Let $\alpha \in [1/2, 1]$.
  Under local predictions, any algorithm that is $\alpha$-consistent is $0$-robust.
  This result holds for both non-authenticated and authenticated frameworks and any prediction system.
\end{restatable}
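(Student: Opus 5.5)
The plan is an indistinguishability argument with three configurations. Two of them are "consistency" configurations, in which roughly half of the nodes are faulty and the predictions are whatever the prediction system supplies. The third is a "robustness" configuration with (almost) no faulty nodes, obtained by gluing the two consistency configurations together. Local predictions make the gluing possible: each node only sees its own $P_i$, so we can hand the nodes on one side the predictions of the first configuration and the nodes on the other side the predictions of the second. Fix an $\alpha$-consistent algorithm \alg with $\alpha\ge 1/2$.

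\textbf{Step 1 (partition and consistency configurations).} Assume first that $n$ is even, and split $\llbracket n\rrbracket$ into $A$ and $B$ with $|A|=|B|=n/2\le \alpha n$.
\begin{itemize}
\item In $C_1$, the nodes of $A$ are honest with input $0$ and $B$ is faulty. Since $f=n/2\le\alpha n$, \autoref{def:consistency} gives a prediction vector $\mathbf{P}^1$ under which \alg reaches agreement in $C_1$, whatever the faulty nodes do. By validity, $A$ outputs $0$.
\item Symmetrically, in $C_2$ the nodes of $B$ are honest with input $1$ and $A$ is faulty. There is a prediction vector $\mathbf{P}^2$ under which $B$ outputs $1$.
\end{itemize}
Because the consistency definition is existential in the prediction, using only these two guaranteed vectors makes the argument independent of the prediction system.

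\textbf{Step 2 (the glued configuration).} Let $C_3$ be the configuration in which all nodes are honest, nodes of $A$ have input $0$ and predictions $P^1_i$, and nodes of $B$ have input $1$ and predictions $P^2_i$.
\begin{itemize}
\item In $C_1$, the faulty nodes of $B$ simply run \alg honestly with input $1$ and predictions $P^2_i$. Then every message received by $A$ in $C_1$ coincides, round by round (by induction on rounds), with what $A$ receives in $C_3$.
\item Symmetrically, in $C_2$ the faulty nodes of $A$ run \alg with input $0$ and predictions $P^1_i$.
\end{itemize}
Hence $A$ outputs $0$ and $B$ outputs $1$ in $C_3$. This violates agreement with $0$ faulty nodes, under a legal local prediction. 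No forgery is needed, since faulty nodes only send what they would send honestly, so the argument also covers authenticated channels.

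\textbf{Step 3 (odd $n$; the main obstacle).} For odd $n$, set $|A|=|B|=\lfloor n/2\rfloor$ and keep one extra node $c$. In $C_1$ the node $c$ is honest alongside $A$, and in $C_2$ it is honest alongside $B$. This keeps $f=\lfloor n/2\rfloor\le \alpha n$. In $C_3$ we make $c$ the single faulty node: it behaves toward $A$ as honest $c$ in $C_1$, and toward $B$ as honest $c$ in $C_2$. This yields a disagreement with one faulty node, which matches the impossible robustness $1/n$ in \autoref{tab:consistency-robustness-summary}.
\begin{itemize}
\item Without authentication this is routine.
\item With authentication, the obstacle is that in $C_1$ the faulty $B$ nodes must relay messages signed by $c$ that $c$ only sends in $C_2$, and they cannot forge these.
\end{itemize}
To get around this, I would first try to choose the faulty sets so that every signature the simulating faulty nodes need belongs to a faulty node. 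Concretely, I would let $c$ be faulty in whichever consistency configuration its two-faced behavior must be simulated, while still keeping the size of that faulty set within $\alpha n$. If that fails for $\alpha=1/2$ exactly, I would fall back on the even case. That case already shows that no $\beta$-robustness with $\beta>0$ is achievable for infinitely many $n$, which is what ``$0$-robust'' asserts asymptotically.
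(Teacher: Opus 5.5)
Your even-$n$ argument is essentially the paper's proof. You use the same split into halves $A$ and $B$, the same two predictions supplied by $\alpha$-consistency, and the same glued all-honest configuration in which $A$ keeps its $C_1$ predictions and $B$ its $C_2$ predictions, with the faulty nodes simply running \alg honestly, so authentication does not help. The paper just assumes $n$ even and never treats odd $n$, so your Step 3 goes beyond it, and your even-case fallback is in fact forced there: for odd $n \ge 3$ and $\alpha = 1/2$, a prediction-free authenticated protocol tolerating $t < n/2$ (e.g.\ Dolev--Strong based) is already $1/2$-consistent and tolerates $(n-1)/2 \ge 1$ faults, so the authenticated claim genuinely fails in that case.
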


\begin{proof}[Proof sketch]
  (Refer to the full proof in \autoref{sec:proof-impossibility-local}).
  The proof uses the fact that nodes can receive different predictions to separate the nodes into two distinct groups, $A$ and $B$.
  We first consider the following two configurations: (1) $A$ is honest with input $0$ while $B$ is faulty and pretends to have input $1$, and (2) $A$ is faulty and pretends to have input $0$ while $B$ is honest with input $1$.
  Assume that the algorithm is $\alpha$-consistent, with $\alpha \ge \frac{1}{2}$: hence, there exists two local predictions, $P_1$ and $P_2$ such that, $A$ agree on $0$ when given $P_1$ in configuration (1) and $B$ agree on $1$ when given $P_2$ in configuration (2) (both by validity).
  Now, consider the following configuration (3): all nodes are honest, $A$ has input $0$ and $B$ has input $1$.
  Give $P_1$ to $A$ and give $P_2$ to $B$.
  $A$ cannot distinguish between (1) and (3); similarly, $B$ cannot distinguish between (2) and (3).
  Hence, $A$ and $B$ respectively output $0$ and $1$ in configuration (3), contradicting agreement.
\end{proof}

\section{Related Work}
\label{sec:related-work}

\noindent\textbf{Algorithms with predictions.}
Introduced by Lykouris and Vassilvitskii~\cite{DBLP:conf/icml/LykourisV18} for online caching, the framework of algorithms with predictions augments algorithms with an additional input — the \emph{prediction} — that may carry useful information about the problem instance.
No assumption is made on its quality: it can be perfect, adversarial, or anything in between.
The framework has since been applied to data structures, mechanism design, and distributed algorithms (see~\cite{OnlineAlgorithmsPredictionsWebsite} for an overview), and recently entered distributed computing with the seminal work of Boyar, Ellen, and Larsen~\cite{GAwithPredictions}.

\noindent\textbf{Byzantine agreement.}
Byzantine agreement was introduced in~\cite{TwoGenerals,PeaseShostackLamport}, establishing that up to $t < n/3$ failures can be tolerated without authentication, and up to $t < n/2$ with it~\cite{10.1145/28395.28420}.
For synchronous communication, $t+1$ rounds are necessary in the non-authenticated setting~\cite{FISCHER1982183}, a bound later extended to the authenticated case~\cite{DolevStrong}.
For asynchronous communication, the FLP impossibility result shows that even a single crash failure cannot be tolerated by a deterministic algorithm~\cite{FLPimpossibility}.
Our algorithms build on the Phase King protocol~\cite{PhaseKing}, though any synchronous Byzantine agreement protocol could serve as a drop-in replacement.
Byzantine agreement with Predictions was first proposed by Ben-David, Dzulfikar, Ellen and Gilbert~\cite{BAwithPredictions}. In this work, the authors present how classification predictions can help develop more efficient Byzantine agreement protocols. Note that, unlike in this work, the agreement algorithms in~\cite{BAwithPredictions} do not provide any robustness guarantees on the resilience of the protocols.


\noindent\textbf{Detection of Byzantine failures.} 
How the detection of faults can help us design more efficient agreement algorithms has been studied under different failure detection mechanisms. Failure detectors have been introduced by Chandra and Toueg~\cite{failureDetectorsChandraToueg} to solve consensus with crash failures. The idea is that failure detectors may be unreliable at first, but provide a perfect prediction after some amount of time. Malkhi and Reiter~\cite{10.5555/794197.795085} were the first to extend the definition of failure detectors to agreement with Byzantine failures. This work has been later extended in~\cite{8129490, BALDONI2003185}.
Another type of detecting Byzantine behavior is via accountability~\cite{10.5555/1973400.1973403}. The notion of accountability has been introduced for dependable distributed systems with the idea to use a scoring system that incentivizes honest behavior and punishes deviations from this behavior. Haeberlen, Kouznetsov, and Druschel~\cite{DBLP:conf/sosp/HaeberlenKD07} presented PeerReview, a system that provides accountability for distributed systems with Byzantine failures. In~\cite{10.5555/1973416.1973421}, several reasons are provided for implementing fault detection in distributed systems.


\section{Limitations and Future Work}
\label{sub:limitations}

Our work considers synchronous binary Byzantine Agreement with a fixed set of $n$ nodes and an upper bound $t$ on the number of Byzantine faults. Proposed algorithms focus on resilience, not efficiency: their runtime and communication complexity depend on the underlying Byzantine agreement subroutines.

An interesting future direction would be to find algorithms using local predictions with $\alpha>1/3$ and $\beta>0$. Based on this, the following open questions arise: Can one design protocols whose round complexity degrades smoothly with prediction error, while their resilience also degrades smoothly?
Can prediction accuracy be leveraged to simultaneously approach optimal resilience and optimal time complexity?

\bibliographystyle{unsrt}
\bibliography{literature}

\newpage
\appendix

\section{Proofs for Resilience Guarantees with Non-Authenticated Channels}
\subsection{Proof of \autoref{thm:consistency-robustness}}
\label{sec:proof-consistency-robustness}
\thmConsistencyRobustness*

\begin{proof}
  We first prove the claim on consistency: we assume that the number of faulty nodes $f$ is lower or equal to $\alpha \cdot n$ and that $P = H$.
  Since $|H| \ge (1-\alpha) \cdot n$, we deduce that \autoref{alg:ba} inserts at most $\lceil \frac{3}{2} \cdot (1 - \alpha) \cdot n \rceil - 1 - \lceil (1-\alpha) \cdot n \rceil$ nodes into $P$ to obtain $L$.
  We obtain:
  \begin{align*}
    \left\lceil \frac{3}{2} \cdot (1 - \alpha) \cdot n \right\rceil - 1 - \left\lceil (1-\alpha) \cdot n \right\rceil &< \frac{3}{2} \cdot (1 - \alpha) \cdot n - (1 - \alpha) \cdot n\\
    &= \frac{1}{2} \cdot (1 - \alpha) \cdot n
  \end{align*}
  where the first inequality holds since, for all real number $r$, we have $\lceil r \rceil - 1 < r$.
  This implies that $L$ contains strictly less than a third of faulty nodes.
  It is known that the Phase King algorithm reaches agreement when strictly fewer than $\lceil |L| / 3 \rceil$ nodes are faulty and the claimed consistency follows.

  We now prove the claimed robustness.
  We assume that the number of faulty nodes $f$ is lower or equal to $\lfloor \frac{1}{2} \cdot (1 - \alpha) \cdot n\rfloor - 1$, the predicted set $P$ is arbitrary.
  We now use that $|L| \ge \frac{3}{2} \cdot (1 - \alpha) \cdot n - 1$ and lower-bound the fraction of faulty nodes in $L$:
  \begin{align*}
    |F \cap L| \le f &\le \left\lfloor \frac{1}{2} \cdot (1 - \alpha) \cdot n\right\rfloor - 1\\
                     &\le \frac{1}{3} \cdot \left\lfloor \frac{3}{2} \cdot (1 - \alpha) \cdot n\right\rfloor - 1\\
                     &\le \frac{|L|}{3} - \frac{2}{3} < \frac{|L|}{3}.
  \end{align*}
  This proves that the Phase King algorithm reaches agreement in $L$ and the claim follows.
\end{proof}

\subsection{Proof of \autoref{thm:smoothness-global-predictions}}
\label{sec:proof-smoothness-global-predictions}
\thmSmoothnessGlobalPredictions*

\begin{proof}
  Let $\alpha \in [1/3, 1[$, we consider \autoref{alg:ba} with trust parameter $\alpha$.
  We note $\eta_F = |P \setminus H|$ the number of faulty nodes in $P$ and $\eta_H = |H \setminus P|$ the number of honest nodes that are not in $P$.
  Note that $\eta = \eta_F + \eta_H$.
  We will prove that the claim holds for each of the three pieces of the function $s$.

  \paragraph*{Case 1.}
  We assume that $\eta \in \llbracket 0, (1-\alpha) \cdot n - 1\rrbracket$ and that the number of faulty nodes is lower or equal to $\alpha \cdot n - \eta$.
  This implies an upper-bound on the number of honest nodes:
  \begin{align*}
    |H| \ge (1 - \alpha) \cdot n + \eta
  \end{align*}
  We now lower-bound the number of honest nodes inside $L$, $|H \cap L|$.
  First, we re-express $|H \cap L|$ in terms of $|H|$:
  \begin{align*}
    |H \cap L| = |H| - |H \setminus L|
  \end{align*}
  We note that $P \subseteq L$ by design of \autoref{alg:ba}, we therefore have $|H \setminus L| \le |H \setminus P|$.
  Combining the above two inequalities with our initial assumption that $\eta < (1-\alpha) \cdot n$ gives
  \begin{align*}
    |H \cap L|\quad  &>\quad  \eta + \eta_F \quad \ge\quad  2 \cdot \eta_F
  \end{align*}
  We obtained that the number of honest nodes inside $L$ is strictly greater than twice the number of faulty nodes $f$.
  The Phase King algorithm therefore solves the problem successfully and the claim follows.
  \\
  \paragraph*{Case 2.}
  We assume that $\eta \in  \llbracket (1-\alpha) \cdot n - 1, \frac{1+\alpha}{4} \cdot n + 1\rrbracket$.
  For this range of errors, we assume that the overall number of faulty nodes is strictly smaller than $n - 2 \cdot \eta$.
  This gives that
  \begin{align*}
    |H| > 2 \cdot \eta
  \end{align*}
  Using the identity $|H| = |H \cap P| + \eta_H$ and rearranging the terms gives
  \begin{align*}
    |H \cap P| \quad > \quad  2 \cdot \eta - \eta_H \quad  \ge \quad 2 \cdot \eta_F.
  \end{align*}
  This last inequality enables us to conclude in case $L = P$ as then, like in the previous case, strictly more than $2/3$ of the nodes of $L$ are honest which ensures the King algorithm reaches agreement.
  We now assume $P \subsetneq L$ which implies $|P| < \frac{3}{2} \cdot (1-\alpha) \cdot n - 1$.
  We use the strict lower-bound on $|H|$ obtained at the beginning of this case and obtain
  \begin{align*}
    |H \cap P| > \eta
  \end{align*}
  which, coupled with our initial assumption that $\eta \ge (1 - \alpha) \cdot n - 1$, gives
  \begin{align*}
    |H \cap P| \ge (1 - \alpha) \cdot n
  \end{align*}
  Moreover, by design of \autoref{alg:ba} it holds that 
  \begin{align*}
    |L| \le \left\lceil \frac{3}{2} \cdot (1-\alpha) \cdot n\right\rceil - 1
  \end{align*}
  gathering the two previous inequalities finally gives
  \begin{align*}
    |L| - |P| \le |L| - |H \cap P| &\le \left\lceil \frac{3}{2} \cdot (1-\alpha) \cdot n\right\rceil - 1  - (1-\alpha) \cdot n\\
    &< \frac{1}{2} \cdot (1-\alpha) \cdot n
  \end{align*}
  where the last inequality holds since, for all real number $r$, we have $\lceil r \rceil - 1 < r$.
  We deduce that, inside $L$, the number of honest nodes is strictly greater than twice the number of faulty nodes and the result follows
  \paragraph*{Case 3.}
  This case immediately holds by using the robustness of the algorithm (\autoref{thm:consistency-robustness}).\qedhere
\end{proof}

\section{Proofs for Impossibility Results with Non-Authenticated Channels}
\subsection{Proof of \autoref{thm:impossibility-consistency-robustness}}
\label{sec:proof-impossibility-consistency-robustness}

\thmImpossibilityConsistencyRobustness*

\begin{proof}
  Let $\alg$ be an algorithm with a consistency no lower than $\alpha$.
  We split the nodes into three sets $A$, $B$ and $C$ such that $|A| = |B| = \frac{1 - \alpha}{2} \cdot n$ and $|C| = \alpha \cdot n$.
  We consider the following Configuration 1: nodes in $A$ are honest with inputs $0$; nodes in $B$ are honest with inputs $1$; nodes in $C$ are faulty, they act as honest nodes with input $0$ when interacting with $A$, and as honest nodes with input $1$ when interacting with $B$.

  $\alg$ is $\alpha$-consistent, it therefore exists a prediction that makes $\alg$ reach agreement in Configuration (1).
  We now exhibit two other configurations, (2) and (3), with $\frac{1-\alpha}{2} \cdot n$ faulty nodes and show that $\alg$ fails to reach agreement in at least one of configuration (1), (2) or (3).
  We consider configurations (2) and (3) defined as follows: in Configuration 2, nodes in $A$ and $C$ are honest with inputs $0$, nodes in $B$ are faulty and run $\alg$ with input $1$; in Configuration (3), nodes in $B$ and $C$ are honest with inputs $1$, nodes in $A$ are faulty and run $\alg$ with input $0$.

  In both configurations, we give every node the prediction that caused $\alg$ to reach agreement in Configuration (1).
  By validity, all honest nodes must output $0$ in Configuration (2) (since all honest inputs are $0$) and $1$ in Configuration (3) (since all honest inputs are $1$).
  However, $A$ cannot tell (1) and (2) apart; likewise, $B$ cannot tell (1) and (3) apart.
  As $A$ and $B$ have the same outputs in (1), it follows that they also have the same outputs in (2) and (3), contradicting validity for one of (2) or (3).
  We showed that $\alg$ is either not $\alpha$-consistent or not $\frac{1 - \alpha}{2}$-robust and the claim follows.
\end{proof}

\subsection{Proof of \autoref{thm:impossibility-smoothness-appendix}}
\label{sec:proof-impossibility-smoothness-appendix}



The following proofs use \autoref{fig:smoothness-1}, \autoref{fig:smoothness-configs} and \autoref{fig:smoothness-configs-2} for visualization.
Those figures show subsets of nodes.
The orange ovals with curly borders depict Byzantine subsets while the blue ovals depict a subset of correct nodes.
Further, an arrow between a faulty subset of nodes and a honest subset of nodes means that the faulty nodes pretend to be honest nodes with input bit the label of the arrow toward the honest subset of nodes.
\thmImpossibilitySmoothnessAppendix*
\begin{proof}
  For the smoothness argument we are only interested in algorithms that have the optimal consistency-robustness tradeoff.
  Therefore, only algorithms that reach $\beta = \frac{1-\alpha}{2}\cdot n-1$ resilience for any prediction are considered.
  We first show that (1) is a set of impossible error and resilience points based on a variable choice of $\alpha \in [1/3, 1[$.
  For that we consider three configurations called (\ref{fig:smoothness-1a}), (\ref{fig:smoothness-1b}) and (\ref{fig:smoothness-1c}) depicted in \autoref{fig:smoothness-1}.
  We divide the nodes into four subsets $A$, $B$, $C$ and $D$ and a single node $x$.
  The size of $D$ is variable and the size of $C$ is determined by $|C|=\alpha \cdot n - |D| -1$. The subsets $A$ and $B$ are of equal size, i.e. $|A| = |B| = \frac{(1-\alpha)\cdot n-1}{2}$.
  As all subsets must have non-negative sizes the following additional bound applies: $|D| \leq \alpha \cdot n - 1$.
  Each configuration of \autoref{fig:smoothness-1} is defined with these five subsets.\\

  We first show that any algorithm fails in at least one of the configurations (\ref{fig:smoothness-1a}), (\ref{fig:smoothness-1b}) or (\ref{fig:smoothness-1c}) with the same input prediction.
  Let \alg be an algorithm, we assume that \alg reaches agreement in configuration (\ref{fig:smoothness-1b}) and (\ref{fig:smoothness-1c}).
  We now use a standard indistinguishability argument: the nodes in $A$ cannot distinguish between (\ref{fig:smoothness-1a}) and (\ref{fig:smoothness-1b}) while the nodes in $B$ cannot distinguish between (\ref{fig:smoothness-1a}) and (\ref{fig:smoothness-1c}).
  However, the nodes in $A$ must output $0$ in (\ref{fig:smoothness-1b}) to obey validity;
  for the same reason, the nodes in $B$ must output $1$ in (\ref{fig:smoothness-1c}).
  It follows that \alg does not reach agreement in (\ref{fig:smoothness-1a}), proving the subclaim.\\
  
  Using what we now proved we show no algorithm can have a resilience better than $\alpha \cdot n$ if it should still satisfy optimal ($\frac{1-\alpha}{2}-1$)-robustness.
  For contradiction we assume ($\alpha \cdot n + 1$)-resilience for $\eta > 0$.
  Let $\eta_a = |D| + 1$ which means the Byzantine subset $D$ and the single Byzantine node $x$ were mispredicted as honest nodes in \autoref{fig:smoothness-1a}.
  We now see the number of Byzantine nodes in (\ref{fig:smoothness-1b}) and (\ref{fig:smoothness-1c}) is $f=\frac{(1-\alpha)\cdot n-1}{2}$.
  Thus we can maximally achieve a robustness of $\beta =\frac{(1-\alpha)\cdot n-1}{2}-1$ which is strictly lower than the ($\frac{(1-\alpha)\cdot n}{2}-1$)-robustness we want to achieve.

  We now show that (2) is a set of impossible error and resilience points.
  For that, we consider three configurations called (\ref{fig:1-a}), (\ref{fig:1-b}) and (\ref{fig:1-c}) depicted in \autoref{fig:smoothness-configs}.
   We divide the nodes into four subsets called $A$, $B$, $C$ and $D$ such that $D$ is of variable size and the rest of the nodes are evenly shared among the other subsets, i.e. $|A| = |B| = |C|$.
  Each configuration of \autoref{fig:smoothness-configs} is defined with those four subsets.
  We can use the same argument as before to show that any algorithms fails in at least one of the configurations (\ref{fig:1-a}), (\ref{fig:1-b}) or (\ref{fig:1-c}) with the same input prediction.

  Using what we just proved, we now show no algorithm can have a smoothness $n - 2 \cdot \eta$.
  Let $\eta = |A|$, notice that each configuration (\ref{fig:1-a}), (\ref{fig:1-b}) and (\ref{fig:1-c}) have error $\eta$.
  The number of faulty nodes is also the same in those three configurations and equals $\eta + |D|$.
  Noticing that $|D| = n - 3 \cdot \eta$, the number of faulty nodes can be rewritten as $n - 2 \cdot \eta$, proving the claim.

  We now show that, if \alg matches (2) (that is, if it has smoothness $n - 2 \cdot \eta - 1$ for $\eta \in \llbracket n (1 - \alpha), n/3 \rrbracket$), then \alg cannot have smoothness (3).
  We define three configurations (\ref{fig:2-a}), (\ref{fig:2-b}) and (\ref{fig:2-c}) shown in \autoref{fig:smoothness-configs-2}.
  We slightly changed the sets $A'$, $B'$ and $C'$ compared to $A$, $B$ and $C$ from the above proof for (2) as follows: we took two nodes away from $C$ and inserted one in $A$ and the other in $B$.
  Assuming that \alg matches (2), \alg reaches agreement in configuration (\ref{fig:2-a}) as, after the slight changes on the subsets, this is strictly below (2).
  Notice that $A'$ must output $0$ in (\ref{fig:2-b}) but cannot distinguish between (\ref{fig:2-a}) and (\ref{fig:2-b}); likewise, $B'$ must output $1$ in (\ref{fig:2-c}) but cannot distinguish between (\ref{fig:2-a}) and (\ref{fig:2-c}), it follows that \alg does not respect validity on either (\ref{fig:2-b}) or (\ref{fig:2-c}).
  
  We now show that \alg cannot have smoothness $n/2 - 1/2 \cdot \eta - 2$ when $\eta \ge n/3$.
  For that, note that the error $\eta$ is the same in both configurations (\ref{fig:2-b}) and (\ref{fig:2-c}).
  We compute:
  \begin{align*}
    \eta &= |D| + |A|
           = |D| + \frac{n-|D|}{3} + 1
           = \frac{n}{3} + \frac{2}{3} \cdot |D| + 1
  \end{align*}
  Moreover, both configurations (\ref{fig:2-b}) and (\ref{fig:2-c}) also have the same number of faulty nodes $f = (n-|D|)/3 + 1 = n/2 - 1/2 \cdot \eta - 2$, the claim follows.
\end{proof}

\pgfmathsetmacro{\coeff}{0.66}
\begin{figure}
	\begin{subfigure}{0.32\textwidth}
		\centering
		\begin{tikzpicture}[
				edge/.style={->, draw=black, thick, -Stealth},
				scale=0.7
			]

			\node[cor={\coeff*2.0}{\coeff*1.2}] (A) at (-1.2, 2) {\small $\bm{(A)}$ : $\bm{0}$};
			\node[cor={\coeff*2.0}{\coeff*1.2}] (B) at (1.2, 2) {\small $\bm{(B)}$ : $\bm{1}$};
			\node[byz={\coeff*2.0}{\coeff*1.2}] (C) at (0, 0.5) {\small $\bm{(C)}$};
			\node[byz={\coeff*1.5}{\coeff*1.2}] (D) at (0.7, 4) {\small $\bm{(D)}$};
			\node[byz={\coeff*1}{\coeff*1}] (x) at (-0.8, 4) {\small $\bm{(x)}$};
			\begin{pgfonlayer}{background}
				\node[ellipse,fill=byznode!20, fit=(D)(x), inner sep=0.2pt, outer sep=1pt] (Dx) {};
			\end{pgfonlayer}

			\draw[edge, bend right=30] (C.east) to node [below right] {$\bm{1}$} (node cs:name=B, angle=-50);
			\draw[edge, bend left=30] (C.west) to node [below left] {$\bm{0}$} (node cs:name=A, angle=-130);

			\draw[edge, bend right=30] (Dx.west) to node [above left] {$\bm{0}$} (node cs:name=A, angle=130);
			\draw[edge, bend left=30] (Dx.east) to node [above right] {$\bm{1}$} (node cs:name=B, angle=50);

		\end{tikzpicture}
		\caption{}
		\label{fig:smoothness-1a}
	\end{subfigure}
	\begin{subfigure}{0.32\textwidth}
		\centering
		\begin{tikzpicture}[
				edge/.style={->, draw=black, thick, -Stealth},
				scale=0.7
			]

			\node[cor={\coeff*2.0}{\coeff*1.2}] (A) at (-1.2, 2) {\small $\bm{(A)}$ : $\bm{0}$};
			\node[byz={\coeff*2.0}{\coeff*1.2}] (B) at (1.2, 2) {\small $\bm{(B)}$ : $\bm{1}$};
			\node[cor={\coeff*2.0}{\coeff*1.2}] (C) at (0, 0.5) {\small $\bm{(C)}$ : $\bm{0}$};
			\node[cor={\coeff*1.5}{\coeff*1.2}] (D) at (1.2, 4) {\small $\bm{(D) : 0}$};
			\node[cor={\coeff*1}{\coeff*1}] (x) at (-1.2, 4) {\small $\bm{(x) : 0}$};

		\end{tikzpicture}
		\caption{}
		\label{fig:smoothness-1b}
	\end{subfigure}
	\begin{subfigure}{0.32\textwidth}
		\centering
		\begin{tikzpicture}[
				edge/.style={->, draw=black, thick, -Stealth},
				scale=0.7
			]

			\node[byz={\coeff*2.0}{\coeff*1.2}] (A) at (-1.2, 2) {\small $\bm{(A)}$ : $\bm{0}$};
			\node[cor={\coeff*2.0}{\coeff*1.2}] (B) at (1.2, 2) {\small $\bm{(B)}$ : $\bm{1}$};
			\node[cor={\coeff*2.0}{\coeff*1.2}] (C) at (0, 0.5) {\small $\bm{(C)}$ : $\bm{1}$};
			\node[cor={\coeff*1.5}{\coeff*1.2}] (D) at (1.2, 4) {\small $\bm{(D) : 0}$};
			\node[cor={\coeff*1}{\coeff*1}] (x) at (-1.2, 4) {\small $\bm{(x) : 0}$};

		\end{tikzpicture}
		\caption{}
		\label{fig:smoothness-1c}
	\end{subfigure}
	\caption{
		In each configuration the nodes are given the same global prediction $P = A \cup B \cup D \cup x$.
		A missing arrow between a faulty and a honest node means, the behavior of the faulty node towards that honest node is not of importance for the proof.
	}
	\label{fig:smoothness-1}
\end{figure}

\pgfmathsetmacro{\coeff}{0.66}
\begin{figure}
  \begin{subfigure}{0.32\textwidth}
    \centering
    \begin{tikzpicture}[
      edge/.style={->, draw=black, thick, -Stealth},
      scale=0.7
      ]
      
      \node[cor={\coeff*2.0}{\coeff*1.2}] (A) at (-1.2, 2) {\small $\bm{(A')}$ : $\bm{0}$};
      \node[cor={\coeff*2.0}{\coeff*1.2}] (B) at (1.2, 2) {\small $\bm{(B')}$ : $\bm{1}$};
      \node[byz={\coeff*2.0}{\coeff*1.2}] (C) at (0, 0.5) {\small $\bm{(C')}$};
      \node[byz={\coeff*1.5}{\coeff*1.2}] (D) at (0, 4) {\small $\bm{(D)}$};

      \draw[edge, bend right=30] (C.east) to node [below right] {$\bm{1}$} (node cs:name=B, angle=-50);
      \draw[edge, bend left=30] (C.west) to node [below left] {$\bm{0}$} (node cs:name=A, angle=-130);

      \draw[edge, bend right=30] (D.west) to node [above left] {$\bm{0}$} (node cs:name=A, angle=130);
      \draw[edge, bend left=30] (D.east) to node [above right] {$\bm{1}$} (node cs:name=B, angle=50);

    \end{tikzpicture}
    \caption{}
    \label{fig:2-a}
  \end{subfigure}
  \begin{subfigure}{0.32\textwidth}
    \centering
    \begin{tikzpicture}[
      edge/.style={->, draw=black, thick, -Stealth},
      scale=0.7
      ]
      \node[cor={\coeff*2.0}{\coeff*1.2}] (A) at (-1.2, 2) {\small $\bm{(A')}$ : $\bm{0}$};
      \node[byz={\coeff*2.0}{\coeff*1.2}] (B) at (1.2, 2) {\small $\bm{(B')}$ : $\bm{1}$};
      \node[cor={\coeff*2.0}{\coeff*1.2}] (C) at (0, 0.5) {\small $\bm{(C')}$ : $\bm{0}$};
      \node[cor={\coeff*1.5}{\coeff*1.2}] (D) at (0, 4) {\small $\bm{(D) : 0}$};

    \end{tikzpicture}
    \caption{}
    \label{fig:2-b}
  \end{subfigure}
  \begin{subfigure}{0.32\textwidth}
    \centering
    \begin{tikzpicture}[
      edge/.style={->, draw=black, thick, -Stealth},
      scale=0.7
      ]
      
      \node[byz={\coeff*2.0}{\coeff*1.2}] (A) at (-1.2, 2) {\small $\bm{(A')}$ : $\bm{0}$};
      \node[cor={\coeff*2.0}{\coeff*1.2}] (B) at (1.2, 2) {\small $\bm{(B')}$ : $\bm{1}$};
      \node[cor={\coeff*2.0}{\coeff*1.2}] (C) at (0, 0.5) {\small $\bm{(C')}$ : $\bm{1}$};
      \node[cor={\coeff*1.5}{\coeff*1.2}] (D) at (0, 4) {\small $\bm{(D) : 1}$};

    \end{tikzpicture}
    \caption{}
    \label{fig:2-c}
  \end{subfigure}
  \caption{
    In each configuration the nodes are given the same global prediction $P = A' \cup B' \cup C'$.
    Compared to \autoref{fig:smoothness-configs}, the set $C'$ has two fewer nodes than $C$ and the sets $A'$ and $B'$ each have one more node than respectively $A$ and $B$.
  }
  \label{fig:smoothness-configs-2}
\end{figure}

\section{Reaching Agreement with Authenticated Channels}
\label{sec:auth}

\begin{algorithm}[H]
  \caption{\localg($v_x$, $P$, $\alpha$)}\label{alg:auth-ba}
  \tcp{This algorithm runs in node $x$}
  \KwIn{the input bit $v_x$, prediction set $P \subseteq \llbracket n\rrbracket$, trust parameter $\alpha \in [1/2,1]$}

  $L \gets P$\\

  $y \gets 1$\\
  \If{$|L| < 2 \cdot (1 - \alpha) \cdot  n - 1$}{
    \Repeat{$|L| \ge 2 \cdot (1 - \alpha) \cdot  n - 1$}{
      \If{$y \notin L$}{
        $L \gets L \cup \{y\}$\\
        $y \gets y + 1$\\
      }
    }
  }
  $t \gets \lceil|L| / 2\rceil$\\
  \eIf{$x \in L$}{
    Run the \authAlg algorithm among the nodes in $L$.\\
    Once \authAlg terminated, broadcast the output.\\
    \KwOut{the output of the \authAlg algorithm}
  }
  {
    Wait the number of rounds taken by \authAlg with $|L|$ nodes and $t-1$ faulty nodes to terminate.\\
    \KwOut{a value received at least $|L|-t+1$ times, otherwise output the initial input bit}
    
  }
\end{algorithm}



\begin{restatable}{theorem}{thmAuthConsistencyRobustness}
  \label{thm:auth-consistency-robustness}
  Let $\alpha \in [1/2, 1]$ and $n$ be the number of nodes.
  \autoref{alg:auth-ba} with input trust parameter $\alpha$ is $\alpha$-consistent and $(1 - \alpha - 1/n)$-robust.
\end{restatable}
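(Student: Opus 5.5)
I will mirror the proof of \autoref{thm:consistency-robustness}, replacing the Phase King threshold ($|F\cap L| < |L|/3$) with the Dolev-Strong threshold. Dolev-Strong is run among the nodes of $L$ with $t = \lceil |L|/2\rceil$. It reaches agreement among the honest members of $L$ whenever strictly fewer than half of $L$ is faulty, i.e., $|F \cap L| < |L|/2$. Once this holds, the honest active nodes broadcast a common output $b$. Every honest passive node then receives $b$ from at least $|L| - |F\cap L| \ge |L| - t + 1$ nodes. Conversely, the opposite value can be received at most $|F\cap L| \le t-1 < |L|-t+1$ times. Hence passive nodes also output $b$, and validity is inherited from the subroutine. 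So both claims reduce to checking $|F\cap L| < |L|/2$. I will first fix the sentinel bookkeeping: as in \autoref{alg:ba}, $L \supseteq P$ and $|L| \ge 2(1-\alpha)n - 1$. Moreover, if $P$ is too small, $|L| = \lceil 2(1-\alpha)n - 1\rceil \le \lceil 2(1-\alpha)n\rceil - 1$.

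\textbf{Consistency.} Assume $f \le \alpha n$ and $P = H$, so $|H| \ge (1-\alpha)n$ and $H \subseteq L$. If $L = P$, then $L = H$ contains no faulty node. Otherwise the number of inserted (possibly faulty) nodes is at most $\lceil 2(1-\alpha)n\rceil - 1 - \lceil(1-\alpha)n\rceil < (1-\alpha)n \le |H| = |H\cap L|$, using $\lceil r\rceil - 1 < r$. Thus $|F\cap L| < |H \cap L|$, i.e., strictly fewer than half of $L$ is faulty. The requirement $\alpha \ge 1/2$ just ensures $2(1-\alpha)n \le n$, so the filling loop terminates.

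\textbf{Robustness.} Let $P$ be arbitrary and $f \le \lfloor (1-\alpha)n\rfloor - 1$. Then
\begin{align*}
  |F\cap L| \le f \le \left\lfloor (1-\alpha)n\right\rfloor - 1 \le \frac{1}{2}\left\lfloor 2(1-\alpha)n\right\rfloor - 1 \le \frac{|L|}{2} - \frac{1}{2} < \frac{|L|}{2},
\end{align*}
using $|L| \ge \lceil 2(1-\alpha)n - 1\rceil \ge \lfloor 2(1-\alpha)n\rfloor - 1$.

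The main obstacle is the rounding: checking that the integer lower bound on $|L|$ combines with $f \le (1-\alpha)n - 1$ to give a strict inequality. This matters because an equality $|F\cap L| = |L|/2$ would break Dolev-Strong combined with the passive-node majority rule. If the chain above is off by one, I would instead write $f \le (1-\alpha)n - 1$ directly and use $|L| \ge 2(1-\alpha)n - 1$ to get $f \le (|L|-1)/2 < |L|/2$.

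A secondary point to verify is the passive threshold $|L|-t+1$ when $|L|$ is odd. Here $|F\cap L| \le (|L|-1)/2 = t-1$, so the honest count is at least $|L|-t+1$, as required.
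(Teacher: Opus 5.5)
Your proof is correct and follows the paper's argument: both claims reduce to $|F\cap L|<|L|/2$, with consistency coming from the fact that fewer than $(1-\alpha)n \le |H|$ nodes are inserted into $P=H$, and robustness from $|L|\ge\lceil 2(1-\alpha)n\rceil-1$ against $f\le(1-\alpha)n-1$. You handle the rounding and the passive-node threshold $|L|-t+1$ more carefully than the paper, which leaves these implicit; note that your passive-node step assumes a passive node counts only messages from members of $L$, one per sender.
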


\begin{proof}
  Let $\alpha \in [1/2, 1]$.
  We first show the claim about consistency.
  We assume that the number of faulty nodes $f = |F|$ is lower or equal to $\alpha \cdot n$ and that the prediction is perfect, i.e., $P = H$.
  We have that $|P| = |H| \ge (1-\alpha) \cdot n$, hence \autoref{alg:auth-ba} inserts strictly fewer than $(1 - \alpha) \cdot n$ nodes in $P$ to obtain $L$.
  It follows that strictly more than half of the nodes in $L$ are honest, hence the \authAlg algorithm reaches agreement and the claim on consistency follows.

  Now on robustness: we assume that there are at most $(1-\alpha) \cdot n - 1$ faulty nodes, the predicted set $P$ is arbitrary.
  As the set $L$ has size no lower than $\lceil 2 (1 - \alpha) \cdot n \rceil - 1$, we deduce that strictly more than half of the nodes of $L$ are honest and the claim on robustness follows.
\end{proof}


\begin{restatable}[Smoothness]{theorem}{thmAuthSmoothnessGlobalPredictions}
  \label{thm:auth-smoothness-global-predictions}
  Let $\alpha \in [1/2, 1]$.
  We consider the function $s$ defined as follows.
  \[
    s(\eta) = \begin{cases}
      \alpha \cdot n - \frac{1}{2} \cdot \eta & \text{if } \eta \in \llbracket 0, 2 (1-\alpha) \cdot n\rrbracket, \\[4pt]
      n - \frac{3}{2} \cdot \eta - 1 & \text{if } \eta \in \llbracket 2 (1-\alpha) \cdot n, \frac{2 \alpha}{3} \cdot n \rrbracket, \\[4pt]
      (1-\alpha) \cdot n - 1 &\text{if } \eta \in \llbracket \frac{2 \alpha}{3} \cdot n, n \rrbracket
    \end{cases}
  \]
  Assuming an error of $\eta$, \autoref{alg:auth-ba} with trust parameter $\alpha$ reaches agreement against $s(\eta)$ faulty nodes.
\end{restatable}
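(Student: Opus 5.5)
The proof mirrors that of \autoref{thm:smoothness-global-predictions}, with the Phase King threshold (more than two thirds honest) replaced by the \authAlg threshold: \authAlg reaches agreement inside $L$ whenever $|H \cap L| > |F \cap L|$. As there, write $\eta_F = |P \setminus H|$ and $\eta_H = |H \setminus P|$, so that $\eta = \eta_F + \eta_H$. The other ingredients are the construction facts $P \subseteq L$ and $|L| \le \max\bigl(|P|, \lceil 2(1-\alpha) n\rceil - 1\bigr)$. The faulty nodes of $L$ split into the $\eta_F$ faulty nodes of $P$ and at most $|L \setminus P|$ padded ones. So it suffices to show
\[
  |H \cap P| > \eta_F + |L \setminus P|
\]
(honest padded nodes only help). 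I will treat the three pieces of $s$ separately.

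\textbf{Case 1:} $\eta \le 2(1-\alpha)n$ and $f \le \alpha n - \eta/2$. Then $|H| \ge (1-\alpha)n + \eta/2$, and hence
\[
  |H \cap P| = |H| - \eta_H \ge (1-\alpha)n + \eta/2 - \eta_H .
\]
\begin{itemize}
  \item If $L = P$, we need $|H\cap P| > \eta_F$. This amounts to $(1-\alpha)n > \eta/2$, which holds except possibly at the endpoint, where the integer rounding from the floor in $\llbracket\cdot\rrbracket$ must be handled.
  \item If $L \supsetneq P$, the padding contributes fewer than $2(1-\alpha)n - |P|$ nodes. Writing $|P| = |H\cap P| + \eta_F$, it suffices that $2|H\cap P| > 2(1-\alpha)n$, i.e.\ $\eta/2 - \eta_H > 0$. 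This fails when $\eta_H$ is large. In that subcase I will use a cruder argument: all of $F$ inside $L$ is at most $f$, while $|H \cap L| \ge |L| - f$. Since $|L| \ge 2(1-\alpha)n - 1$, and using $f \le \alpha n - \eta/2$ together with the many honest nodes outside $P$, I expect the needed inequality to follow by combining the two bounds.
\end{itemize}
This padded subcase is the main obstacle: the honest nodes missing from $P$ may or may not be picked up by the padding. The clean resolution is to bound $|F\cap L| \le \eta_F + \min(f-\eta_F, |L\setminus P|)$ and split according to which term is the minimum.

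\textbf{Case 2:} $\eta \in \llbracket 2(1-\alpha)n, \frac{2\alpha}{3}n\rrbracket$ and $f \le n - \frac32\eta - 1$. Then $|H| \ge \frac32\eta + 1$, so
\[
  |H\cap P| \ge \tfrac32\eta + 1 - \eta_H \ge \eta_F + \tfrac{\eta}{2} + 1 > \eta_F ,
\]
which settles the subcase $L = P$. If $P \subsetneq L$, then $|H\cap P| > \eta/2 \ge (1-\alpha)n$. The padding adds fewer than $2(1-\alpha)n - |P| \le 2(1-\alpha)n - |H\cap P| - \eta_F$ nodes, so
\[
  |F\cap L| < 2(1-\alpha)n - |H\cap P| < |H\cap P| ,
\]
which is exactly the approach of Case 2 in the non-authenticated proof.

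\textbf{Case 3:} this follows directly from the $(1-\alpha - 1/n)$-robustness of \autoref{thm:auth-consistency-robustness}.

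Finally, the passive nodes outside $L$ output correctly. Honest nodes of $L$ number at least $\lceil |L|/2\rceil$ in each case, so every passive node receives the common output at least $|L| - t + 1$ times. Moreover, the faulty nodes of $L$ are fewer than $t$, so they cannot push a wrong value over that threshold. Validity and termination are inherited from \authAlg.
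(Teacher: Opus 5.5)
\textbf{There is a genuine gap in Case~1, and it cannot be filled.} The padded subcase you flagged as the main obstacle is where the argument breaks, because the first piece of $s$ is false for \autoref{alg:auth-ba}.

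Take $n=16$ and $\alpha=3/4$, so the padding threshold is $2(1-\alpha)n-1=7$. Let $F=\{1,\dots,9\}$, $H=\{10,\dots,16\}$ and $P=\{14,15,16\}$. Then $\eta=\eta_H=4\le 2(1-\alpha)n$ and $f=9<s(4)=10$, so the example survives even the strict reading $f<\alpha n-\eta/2$ that the paper uses. Since $|P|=3<7$, the algorithm pads with the four smallest IDs $1,2,3,4$, all of which are faulty. So $L$ has $4$ faulty and $3$ honest nodes.

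Let every honest input be $0$, and let the faulty members of $L$ follow the protocol with input $1$. The execution inside $L$ is then identical to one in which $\{1,2,3,4\}$ are honest with input $1$ and $\{14,15,16\}$ form a faulty minority. There validity forces output $1$, so the Dolev--Strong subroutine outputs $1$, the passive nodes adopt it, and validity fails.

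More generally, take $\eta_F=0$ with all padding faulty. You then need $|H\cap P|=n-f-\eta>|L|/2$. When $2(1-\alpha)n$ is an integer this means $f<\alpha n-\eta+\tfrac12$, so the slope is $-1$, not $-1/2$.

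Your proposed split via $\min(f-\eta_F,|L\setminus P|)$ cannot rescue this:
\begin{itemize}
\item The branch where the minimum is $|L\setminus P|$ is exactly the failing inequality $|H\cap P|>|L|/2$.
\item The branch where the minimum is $f-\eta_F$ needs $f<|L|/2\approx(1-\alpha)n$. That is well below $\alpha n-\eta/2$ whenever $\alpha>1/2$ and $\eta$ is small.
\end{itemize}

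You will not find the missing step in the paper either. Its Case~1 derives $|L\cap H|>\eta_F$ and then concludes that more than half of $L$ is honest. That conclusion tacitly assumes no node of $L\setminus P$ is faulty, so it is only valid when $L=P$.

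What \autoref{alg:auth-ba} does guarantee on the first range is $f\le\alpha n-\eta$ (for $\alpha<1$). Then $|H|\ge(1-\alpha)n+\eta$, so $|H\cap P|\ge(1-\alpha)n+\eta_F$. This gives $|H\cap P|>\eta_F$ when $L=P$. When padding occurs, your Case~2 computation gives $|F\cap L|\le|L|-|H\cap P|<2(1-\alpha)n-|H\cap P|\le|H\cap P|$.

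Your Case~2, Case~3 and passive-node argument are correct and follow the paper's route. Your Case~2 is more careful than the paper's, since it explicitly counts the padded nodes as possibly faulty.
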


\begin{proof}
  Let $\alpha \in [1/2, 1]$, we consider \autoref{alg:auth-ba} with trust parameter $\alpha$.
  We call $\eta_F$ the number of faulty nodes in $P$ and $\eta_H$ the number of honest nodes not in $P$.
  By definition, it holds that $\eta = \eta_F + \eta_H$.\\
  We divide our proof in three parts, one part for each piece of the function $s$ definition: we show that \autoref{alg:auth-ba} has smoothness $s$ when (1) $\eta \in [0, 2(1-\alpha) \cdot n[$, then (2) when $\eta \in [2(1-\alpha) \cdot n, \frac{2\alpha }{3} \cdot n]$ and finally (3) when $\eta \in [\frac{2\alpha }{3} \cdot n, n]$.
  \\\\
  \noindent\textbf{Case 1.}
    In this case, we assume that $\eta \in [0, 2 (1-\alpha) \cdot n]$.
    We assume that the number of faulty nodes $|F|$ is strictly smaller than $\alpha \cdot n - \eta / 2$.
    This directly implies an upper-bound on the number of honest nodes:
    \begin{align*}
      |H| > (1 - \alpha) \cdot n + \eta / 2.
    \end{align*}
    Recall that $L$ is the set of trusted nodes of the algorithm, that is, the set of nodes in which we run the \authAlg algorithm.
    \autoref{alg:ba} creates $L$ by inserting more nodes in $P$ in case $|P| < \lceil n \cdot 2 (1-\alpha)\rceil - 1$, otherwise $L = P$.
    In either cases, there are at most $\eta_H$ honest nodes outside of $L$.
    We lower-bound the number of honest nodes in $L$ using the fact that $\eta_H \ge |H \setminus L|$ and that $\eta = \eta_F + \eta_H$:
    \begin{align*}
      |H| = |H \setminus L| + |L \cap H| &> (1 - \alpha) \cdot n + \eta/2\\
      \implies \eta_H + |L \cap H| &> (1 - \alpha) \cdot n + \eta/2
    \end{align*}
    We then use our initial assumption that $\eta \le 2(1-\alpha) \cdot n$ and derive
    \begin{align*}
      \eta_H + |L \cap H| > \eta/2 + \eta/2 = \eta\quad
      \implies\quad |L \cap H| > \eta_F
    \end{align*}
    Hence, strictly more than half of the nodes in $L$ are honest, the \authAlg algorithm therefore solves the problem successfully in $L$ and \autoref{alg:auth-ba} reaches agreement, the claim follows.
    \\\\
  \noindent\textbf{Case 2.}
    In this case, we assume that $\eta \in  [2(1-\alpha) \cdot n, \frac{2\alpha }{3} \cdot n]$ and that the number of faulty nodes is strictly smaller than $n - 3/2 \cdot \eta$.
    We then have that
    \begin{align*}
      |F| < n - \frac{3}{2} \cdot \eta\quad
      \implies\quad |H| > \frac{3}{2} \cdot \eta
    \end{align*}
    Noting that $\eta_H = |H \setminus P|$, it holds that
    \begin{align*}
      |H| = |H \setminus P| + |H \cap P| > \frac{3}{2} \cdot \eta\quad
      \implies\quad |H \cap P| > \frac{3}{2} \cdot \eta_F
    \end{align*}
    We can already conclude in case $L = P$, that is, in case $|P| \ge \lceil n \cdot 2 (1-\alpha)\rceil - 1$: then strictly more than half of the nodes in $L$ are honest and the claim follows as in the previous case.\\
    We now assume that $|P| < \lceil n \cdot 2 (1-\alpha)\rceil - 1$.
    Using the equality our assumption that $|H| > \eta$ and the obvious fact that $\eta_H \le \eta$ gives
    \begin{align*}
      |H \setminus P| + |P \cap H| > \frac{3}{2} \cdot \eta\quad
      \implies\quad |P \cap H| > \frac{1}{2} \cdot \eta
    \end{align*}
    We then use our assumption that $\eta \ge 2 (1-\alpha) \cdot n$ and obtain
    \begin{align*}
      |P \cap H| &> (1 - \alpha) \cdot n
    \end{align*}
    Hence $P$ already contains $(1 - \alpha) \cdot n$ honest nodes.
    As $L$ has a size no greater than $2 (1 - \alpha) \cdot n$, more than half of the nodes in $L$ are therefore honest and the claim follows.
    \\\\
  \noindent\textbf{Case 3.}
    We assume that $\eta \in  \llbracket \frac{2\alpha }{3} \cdot n, n \rrbracket$.
    We use the robustness of \autoref{alg:auth-ba} (\autoref{thm:auth-consistency-robustness}) and the claim immediately follows.\qedhere
\end{proof}



\begin{restatable}[Impossibility for Authenticated Consistency and Robustness]{theorem}{thmAuthImpossibilityConsistencyRobustness}
  \label{thm:auth-impossibility-consistency-robustness}
  Let $\alpha \in [1/2, 1]$.
  In the authenticated framework, any prediction-augmented algorithm with consistency $\alpha$ has a robustness strictly lower than $1 - \alpha$.
  This result is independent from the chosen prediction system.
\end{restatable}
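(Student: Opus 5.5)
We mirror the indistinguishability argument of \autoref{thm:impossibility-consistency-robustness}, now exploiting that with authentication a two-group split suffices. Let $\alg$ be $\alpha$-consistent, and suppose for contradiction that it is also $(1-\alpha)$-robust. We partition the nodes into two sets $A$ and $C$ with $|A| = (1-\alpha)\cdot n$ and $|C| = \alpha \cdot n$; rounding is handled by taking floors and ceilings so that both sizes stay within the consistency and robustness budgets respectively.

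\textbf{Two validity-forcing configurations.} In Configuration (1), the nodes of $A$ are honest with input $0$ and the nodes of $C$ are faulty. Since $|C| \le \alpha \cdot n$, $\alpha$-consistency yields a prediction $P^\ast$ under which $\alg$ reaches agreement in (1) against every behaviour of $C$. By validity, $A$ must then output $0$ in (1), because $0$ is the only honest input. In Configuration (2), the nodes of $A$ are faulty and the nodes of $C$ are honest with input $1$, and every node receives the same prediction $P^\ast$. There are $(1-\alpha)\cdot n$ faults, so the assumed robustness forces agreement. By validity, $C$ outputs $1$ in (2).

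\textbf{Indistinguishability.} The crux is making $A$'s view in (1) identical to its view in (2) while $C$ is honest in (2). In (1), the faulty set $C$ simply runs $\alg$ honestly with input $1$ and prediction $P^\ast$. In (2), the faulty set $A$ runs $\alg$ honestly with input $0$ and prediction $P^\ast$. Then the two executions are literally the same: all nodes follow the protocol, $A$ has input $0$, $C$ has input $1$, and the prediction is $P^\ast$ throughout. Consequently $A$ outputs $0$ and $C$ outputs $1$ in this common execution. This is consistent in (1), where only $A$ is honest, and in (2), where only $C$ is honest, so we have not yet produced a contradiction.

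\textbf{The main obstacle, and how to resolve it.} To close the argument we add a third configuration (3) in which all nodes are honest, $A$ has input $0$, $C$ has input $1$, and the prediction is $P^\ast$. This is the same execution as above, so its error $\eta$ is irrelevant: (3) has $0$ faults, which is within the robustness bound. Robustness therefore demands agreement in (3), yet $A$ outputs $0$ and $C$ outputs $1$, a contradiction. Authentication does not interfere at any step, because no faulty node ever forges a message or presents two histories: each faulty node either behaves honestly (in the common execution) or is left unconstrained (in the consistency step). This is exactly why the bound is $1-\alpha$ rather than $\frac{1-\alpha}{2}$.

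The step requiring care is the choice of $P^\ast$. Consistency only guarantees that some prediction succeeds in (1) for a specific set of faulty nodes, and we then reuse that same $P^\ast$ in (2) and (3), where the robustness assumption holds for every prediction. Finally, we check the integer sizes: $|A| \le \lceil(1-\alpha)n\rceil$ faults must fall under the robustness threshold $(1-\alpha)n$, possibly up to a single rounding unit, which matches the strictness claimed in \autoref{thm:auth-impossibility-consistency-robustness}.
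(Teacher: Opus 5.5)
\textbf{Verdict.} Your proof is correct and uses the same two-group, three-configuration indistinguishability argument as the paper. The faulty group runs \alg honestly with input $1$ (resp.\ $0$) and the consistency prediction $P^\ast$. Configurations (1), (2) and the all-honest (3) are therefore one and the same execution, and validity in (1) and (2) contradicts agreement in (3).

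\textbf{Your set sizes are the right ones.} The paper takes $|A| = \alpha n$ for the group that is honest in Configuration 1. That leaves $\alpha n \ge (1-\alpha) n$ faulty nodes in Configuration 2, which $(1-\alpha)$-robustness does not cover when $\alpha > 1/2$. With your assignment, (1) uses exactly the consistency budget and (2) exactly the robustness budget.

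\textbf{The rounding paragraph is wrong.} Since $|A| + |C| = n$, you can have both $|C| \le \alpha n$ and $|A| \le (1-\alpha) n$ only when $\alpha n$ is an integer. Otherwise Configuration (2) has $\lceil (1-\alpha) n \rceil$ faults, which $(1-\alpha)$-robustness does not cover. The strict inequality in the statement does not absorb this unit.

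In fact the statement is false for $\alpha n \notin \mathbb{Z}$. Take $n = 3$ and $\alpha = 1/2$. A prediction-oblivious authenticated protocol (Dolev--Strong broadcast of every input, then a majority vote) tolerates $\lfloor 3/2 \rfloor = 1$ fault under every prediction. It is therefore both $\tfrac{1}{2}$-consistent and $\tfrac{1}{2}$-robust. So state $\alpha n \in \mathbb{Z}$ as an explicit assumption, which the paper also assumes implicitly.
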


\begin{proof}
Let \alg be an algorithm that is $\alpha$-consistent.
  We split the nodes in two subsets $A$ and $B$ such that $|A| = \alpha \cdot n$ and $|B| = (1 - \alpha) \cdot n$.
  We consider three configuration, Configuration 1, 2 and 3.
  
  In Configuration 1, the nodes of $A$ are honest with input $0$ while the nodes of $B$ are faulty; the nodes receive a prediction $P$ such that \alg reaches agreement ($\alpha$-consistency), the nodes of $B$ play as if they had input $1$ and received prediction $P$.
  In Configuration 2, the nodes of $A$ are faulty and the nodes of $B$ are honest with input $1$; the nodes receive the same prediction $P$ as in Configuration 1, the nodes of $A$ play as if they had input $0$ and received prediction $P$.
  In Configuration 3, finally, all the nodes are honest and receive prediction $P$, the nodes of $A$ have input $0$ and the nodes of $B$ have input $1$.
  
  We see that $A$ cannot distinguish between Configurations 1 and 3; similarly, $B$ cannot distinguish between Configurations 2 and 3.
  $A$ must output $0$ in Configuration 1 and $B$ must output $1$ in Configuration 2 which prevents from reaching agreement in Configuration 3.
  \alg fails to solve the problem in either Configuration 2 or Configuration 3 which shows \alg is not $(1-\alpha)$-robust, proving the claim.
\end{proof}



\begin{restatable}[Impossibility on Smoothness]{theorem}{thmAuthImpossibilitySmoothness}
  \label{thm:auth-impossibility-smoothness}
  Let $\alpha \in [1/2,1]$, let \alg be an algorithm that is $\alpha$-consistent. We consider the function $\bar{s}$ defined as follows.
	\[
		\bar{s}(\eta) = \left\{\begin{array}{l@{\quad}l@{\quad}r}
			\alpha \cdot n + 1 & if \eta \in \llbracket 0, (1-\alpha)\cdot n \rrbracket, & (1) \\
			n -  \eta & \text{if } \eta \in \llbracket (1-\alpha) \cdot n, \alpha \cdot n\rrbracket, & (2) \\
		\end{array}\right.
	\]
  The smoothness function of \alg is strictly lower than $\bar{s}$.
\end{restatable}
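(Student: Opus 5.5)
We argue by indistinguishability, as in the proof of \autoref{thm:auth-impossibility-consistency-robustness}, but now we tune the sizes of the groups and the common global prediction so that all configurations involved have the same error $\eta$ and the same number of faults. We split the nodes into $A$ (honest-looking, input $0$), $B$ (honest-looking, input $1$) and possibly a third group $D$. We then use three configurations that all receive the same prediction $P$.
\begin{itemize}
\item In configuration (i), $A$ is honest with input $0$ and $B$ is faulty and simulates honest nodes with input $1$.
\item In configuration (ii), $B$ is honest with input $1$ and $A$ is faulty and simulates honest nodes with input $0$.
\item In configuration (iii), both $A$ and $B$ are honest, and $D$ (if nonempty) is faulty and plays its honest role with input $0$ towards $A$ and with input $1$ towards $B$.
\end{itemize}
Signatures do not hurt us here, because every faulty node only ever runs the honest protocol with a fixed input. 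Its messages are therefore genuinely signed by itself, and no forgery is needed. In (iii), the nodes of $D$ run two honest executions in parallel. This is allowed, since they sign both and honest nodes never see both transcripts in a way that exposes them. Alternatively, for piece (2) we can simply take $D$ to be faulty in all three configurations and silent. Then $A$ cannot distinguish (i) from (iii), and $B$ cannot distinguish (ii) from (iii). Validity forces $A$ to output $0$ in (i) and $B$ to output $1$ in (ii), so agreement fails in (iii). Hence at least one of the three configurations defeats \alg.

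For piece (2), $\eta\in\llbracket(1-\alpha)n,\alpha n\rrbracket$, we take $P=A\cup B$ with $|A|=|B|=\eta$ and $D=\llbracket n\rrbracket\setminus(A\cup B)$ faulty and silent, so $|D|=n-2\eta\ge 0$.
\begin{itemize}
\item In (i), $H=A$, so the error is $|P\setminus H|=|B|=\eta$ and $f=|B|+|D|=n-\eta$.
\item Configuration (ii) is symmetric, with error $\eta$ and $f=n-\eta$.
\item In (iii), we make $H=A\cup B$ and $F=D$. The error is $0$, which is too small, so we reshape (iii). We put the faulty nodes of $D$ outside $P$ and shrink $P$'s honest coverage to reach error $\eta$. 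This changes what $A$ and $B$ see, since $P$ must be common to all three configurations.
\end{itemize}
The cleaner route is to mirror the non-authenticated smoothness proof. Make $D$ faulty in all three configurations, and let (iii) be the configuration with $A$ and $B$ honest but a part $A_0\subseteq A$ of size $\eta$ mispredicted, i.e.\ $A_0\notin P$. We then check that in each configuration $f=n-\eta$ and the error is exactly $\eta$, by computing $|P\setminus H|+|H\setminus P|$ set by set. The conclusion is that no algorithm tolerates $n-\eta$ faults at error $\eta$, so $s(\eta)<n-\eta$.

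For piece (1), $\eta\le(1-\alpha)n$, we reuse the argument of \autoref{thm:auth-impossibility-consistency-robustness} with $|A|=\alpha n+1$, $|B|=(1-\alpha)n-1$ and an added set of $\eta$ mispredicted faulty nodes taken inside $B$'s side. This yields $\alpha n+1$ faults at error $\eta$ in one of the configurations, so $s(\eta)<\alpha n+1$. Because \alg is only assumed $\alpha$-consistent, the prediction fed to all configurations can be fixed as one that makes (i) succeed, exactly as in the earlier proof.

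The main obstacle is the bookkeeping. We must choose a single global $P$ and group sizes so that all three configurations have simultaneously the same error $\eta$ and the same fault count, equal to $\bar s(\eta)$. I would first try $P=A\cup B$ with $|A|=|B|$ and silent $D$, adjust by moving nodes between $D$ and $P$, and then verify the boundary values $\eta=(1-\alpha)n$ and $\eta=\alpha n$ where the two pieces meet.
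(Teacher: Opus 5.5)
\textbf{Piece (2).} Your first construction is sound under signatures: $P=A\cup B$, $|A|=|B|=\eta$, and $D$ faulty and silent in all three configurations. You then misread the role of (iii) and replace it with a plan that cannot work. In your scheme the faulty set of (iii) is $D$, which is strictly contained in the faulty sets $B\cup D$ and $A\cup D$ of (i) and (ii). So no reshaping of $P$ and no choice of $A_0$ can give all three configurations $n-\eta$ faults.

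Mirroring the non-authenticated triangle is not available either. There a group equivocates in one configuration but is honest in the others, so the faulty group in those other configurations would have to replay the equivocator's signed messages. For the same reason, your remark that honest nodes ``never see both transcripts'' is false once $A$ and $B$ relay signed messages. It is harmless only because $D$ is faulty, with identical behaviour, everywhere.

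The missing idea is that (iii) need not have error $\eta$: it is the consistency configuration. Its honest set is exactly $P$, and it has $|D|=n-2\eta\le\alpha n$ faults because $\eta\ge(1-\alpha)n$. So $\alpha$-consistency (read as success under the correct prediction, as the paper does when it computes errors) makes \alg succeed there. Indistinguishability then forces a failure in (i) or (ii), each with error exactly $\eta$ and $n-\eta$ faults. This is the role of configuration $a$ in the paper's proof, and it completes your argument.

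It does so only for $\eta\le n/2$, since you need $|D|\ge 0$. The paper reaches $\eta\in[n/2,\frac{1+\alpha}{2}n]$ with $|A|=|B|=(n-c)/2$ and $C$ faulty only in the consistency configuration. That is exactly the triangle that breaks with signatures: for $0<c<n/2$, an ordinary authenticated protocol with $t<n/2$ succeeds in all three of its configurations.

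In fact, for $\eta>n/2$ the claimed bound itself fails. For a fixed $P$, the honest sets admitted by consistency, by optimal robustness, and by $n-\eta$ faults at error $\eta$ pairwise intersect. Hence Dolev--Strong broadcast of all inputs, followed by outputting the value on which some admissible honest set is unanimous, meets all three requirements. So $\eta\le n/2$ is all that can be proved here.

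\textbf{Piece (1).} Here the gap is more basic: the bound does not follow from $\alpha$-consistency alone. The algorithm that runs agreement only among the nodes of $P$ is $\alpha$-consistent for every $\alpha$ and tolerates $n-1$ faults at $\eta=0$. That violates piece (1) whenever $\alpha n+1\le n-1$.

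The paper's proof adds the hypothesis that \alg is optimally robust, i.e.\ it tolerates $(1-\alpha)n-1$ faults under every prediction, and this is what drives the argument. Take an honest group of size $(1-\alpha)n-1$ with input $0$, let the other $\alpha n+1$ nodes be faulty, and let $P$ be that group plus $|C|+1$ of the faulty nodes. Two configurations are then certified by robustness whatever $P$ is: the one where only that group is faulty, and the fault-free one. So \alg must fail in the configuration with $\alpha n+1$ faults, whose error $|C|+1$ is tunable.

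In your plan, $P$ is fixed by consistency so that (i) succeeds. Then the error of the configuration with $\alpha n+1$ faults is out of your control; for the correct prediction of (i) it equals $n$. Moreover, nothing certifies the fault-free configuration (iii) under that $P$, so the forced failure may land there, with $0$ faults and an unrelated error. If instead you choose $P$ yourself, say $B$ plus $\eta$ nodes of $A$, then consistency says nothing about (i) or (iii). Either way, you need the robustness assumption for the two auxiliary configurations, and it is not in the statement.
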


\begin{proof}
  Let \alg be an algorithm that is $\alpha$-consistent and should satisfy optimal ($1-\alpha-\frac{1}{n}$)-robustness. We first show that (1) is a set of impossible error and resilience points based on a variable choice of $\alpha \in [1/2,1[$. For that we consider three configurations called (\ref{fig:auth_smoothness-1a}), (\ref{fig:auth_smoothness-1b}) and (\ref{fig:auth_smoothness-1c}) depicted in \autoref{fig:auth_smoothness-1}.
  We divide the nodes into three subsets $A$, $B$ and $C$ and a single node $x$.
  The size of $C$ is variable and the size of $B$ is determined by $|B| = \alpha \cdot n - C$. The subset A is of size $(1-\alpha)\cdot n -1$.
  As all subsets must have non-negative sizes the size of $C$ is limited by $|C| \leq \alpha \cdot n -1$.

  We first show that any algorithm fails in at least one of the configurations (\ref{fig:auth_smoothness-1a}), (\ref{fig:auth_smoothness-1b}) and (\ref{fig:auth_smoothness-1c}) with the same input prediction.
  We see that $A$ cannot distinguish between configurations (\ref{fig:auth_smoothness-1b}) and (\ref{fig:auth_smoothness-1c}); similarly $B$ cannot distinguish between configurations (\ref{fig:auth_smoothness-1b}) and (\ref{fig:auth_smoothness-1c}). $A$ must output $0$ in configuration (\ref{fig:auth_smoothness-1a}) and $B$ must output $1$ in configuration (\ref{fig:auth_smoothness-1b}) which prevents from reaching agreement in configuration (\ref{fig:auth_smoothness-1c}).

  Using what we now proved we show no algorithm can have a resilience better that $\alpha \cdot n$ if it should still satisfy optimal ($1-\alpha-\frac{\eta}{n}$)-robustness.
  For contradiction we assume ($\alpha \cdot n + 1$)-resilience for $\eta > 0$. Let $\eta_a = |C|+1$ which can be understood as a misprediction of the Byzantine subset $C$ and the single node $x$ in configuration (\ref{fig:auth_smoothness-1a}).
  We now see the number of Byzantine nodes in (\ref{fig:auth_smoothness-1b}) is $f = (1-\alpha)\cdot n -1$. Thus we can macimally achieve a robustness of $\beta = (1-\alpha)\cdot n -2$ which is strictly lower thatn the ($(1-\alpha)\cdot n-1$)-robustness we want to achieve.
  From the limits on $|C|$ we can also conclude $\eta_a \leq \alpha \cdot n$.
  
  We now show that (2) is a set of impossible error and resilience points.
  For that we consider three configurations (a), (b) and (c).
  We divide the nodes into three subsets $A$, $B$ and $C$ such that $|C| = c$ and the rest of the nodes are evenly shared between $A$ and $B$, i.e., $|A| = |B|$.
  Let the size of $C$ be variable with $|C| \in [0, \alpha \cdot n]$
  Each configuration is slightly different and defined as follows.
  In configuration $a$, the nodes of $C$ are faulty, the nodes of $A$ are honest with input $0$, the nodes of $B$ are honest with input $1$, the nodes are given as input a prediction $P$ that make \alg reach agreement in this configuration ($\alpha$-consistency of \alg); the nodes of $C$ are playing as if they had input $0$ and prediction $P$ with the nodes of $A$, and as if they had input $1$ and prediction $P$ with the nodes of $B$.
  In configuration $b$, the nodes of $B$ are faulty and play as if they had input $1$ with prediction $P$, the other nodes have input $0$ and prediction $P$.
  In configuration $c$, the nodes of $A$ are faulty and play as if they had input $0$ with prediction $P$, the other nodes have input $1$ and prediction $P$.
  
  \alg reaches agreement in configuration $a$ as it is $\alpha$-consistent.
  However, the nodes of $A$ cannot distinguish between configurations $a$ and $b$, likewise, the nodes of $B$ cannot distinguish between configurations $a$ and $c$: \alg does not solve the problem for one of the configurations $a$ and $b$.
  
  We now show our claim.
  Notice that configuration $a$ and $b$ both have an error of $\eta = (n+|C|)/2$ and a number of faulty nodes $f = (n-|C|)/2$.
  We showed that \alg cannot solve the problem against $n - \eta$ faulty nodes, proving the claim.
\end{proof}
\pgfmathsetmacro{\coeff}{0.66}
\begin{figure}
	\begin{subfigure}{0.32\textwidth}
		\centering
		\begin{tikzpicture}[
				edge/.style={->, draw=black, thick, -Stealth},
				scale=0.7
			]

			\node[cor={\coeff*2.0}{\coeff*1.2}] (A) at (-1.2, 2) {\small $\bm{(A)}$ : $\bm{0}$};
			\node[byz={\coeff*2.0}{\coeff*1.2}] (B) at (1.2, 2) {\small $\bm{(B)}$ : $\bm{1}$};
			\node[byz={\coeff*1.5}{\coeff*1.2}] (C) at (1.2, 3.5) {\small $\bm{(C)}$ : $\bm{1}$};
			\node[byz={\coeff*1}{\coeff*1}] (x) at (-1.2, 3.5) {\small $\bm{(x)}$ : $\bm{1}$};

		\end{tikzpicture}
		\caption{}
		\label{fig:auth_smoothness-1a}
	\end{subfigure}
	\begin{subfigure}{0.32\textwidth}
		\centering
		\begin{tikzpicture}[
				edge/.style={->, draw=black, thick, -Stealth},
				scale=0.7
			]

			\node[byz={\coeff*2.0}{\coeff*1.2}] (A) at (-1.2, 2) {\small $\bm{(A)}$ : $\bm{0}$};
			\node[cor={\coeff*2.0}{\coeff*1.2}] (B) at (1.2, 2) {\small $\bm{(B)}$ : $\bm{1}$};
			\node[cor={\coeff*1.5}{\coeff*1.2}] (C) at (1.2, 3.5) {\small $\bm{(C)}$ : $\bm{1}$};
			\node[cor={\coeff*1}{\coeff*1}] (x) at (-1.2, 3.5) {\small $\bm{(x)}$ : $\bm{1}$};

		\end{tikzpicture}
		\caption{}
		\label{fig:auth_smoothness-1b}
	\end{subfigure}
	\begin{subfigure}{0.32\textwidth}
		\centering
		\begin{tikzpicture}[
				edge/.style={->, draw=black, thick, -Stealth},
				scale=0.7
			]

			\node[cor={\coeff*2.0}{\coeff*1.2}] (A) at (-1.2, 2) {\small $\bm{(A)}$ : $\bm{0}$};
			\node[cor={\coeff*2.0}{\coeff*1.2}] (B) at (1.2, 2) {\small $\bm{(B)}$ : $\bm{1}$};
			\node[cor={\coeff*1.5}{\coeff*1.2}] (C) at (1.2, 3.5) {\small $\bm{(C)}$ : $\bm{1}$};
			\node[cor={\coeff*1}{\coeff*1}] (x) at (-1.2, 3.5) {\small $\bm{(x)}$ : $\bm{1}$};

		\end{tikzpicture}
		\caption{}
		\label{fig:auth_smoothness-1c}
	\end{subfigure}
	\caption{
		In each configuration the nodes are given the same global prediction $P = A \cup C \cup x$.
	}
	\label{fig:auth_smoothness-1}
\end{figure}

\section{Proof of \autoref{thm:impossibility-local}}
\label{sec:proof-impossibility-local}

\thmImpossibilityLocal*

\pgfmathsetmacro{\coeff}{0.66}
\begin{figure}
  \begin{subfigure}{0.32\textwidth}
    \centering
    \begin{tikzpicture}[
      edge/.style={->, draw=black, thick, -Stealth},
      scale=0.7
      ]
      \node[cor={\coeff*1.5}{\coeff*4}] (A) at (-1, 0) {$\substack{\bm{(A)}\\ \\ \bm{0}, \ P_0}$};
      \node[byz={\coeff*1.5}{\coeff*4}] (B) at (1, 0) {$\substack{\bm{(B)}\\ \\ \bm{1}, \ P_1}$};
    \end{tikzpicture}
    \caption{Configuration 1}
  \end{subfigure}
  \begin{subfigure}{0.32\textwidth}
    \centering
    \begin{tikzpicture}[
      edge/.style={->, draw=black, thick, -Stealth},
      scale=0.7
      ]
      \node[byz={\coeff*1.5}{\coeff*4}] (A) at (-1, 0) {$\substack{\bm{(A)}\\ \\ \bm{0}, \ P_0}$};
      \node[cor={\coeff*1.5}{\coeff*4}] (B) at (1, 0) {$\substack{\bm{(B)}\\ \\ \bm{1}, \ P_1}$};
    \end{tikzpicture}
    \caption{Configuration 2}
  \end{subfigure}
  \begin{subfigure}{0.32\textwidth}
    \centering
    \begin{tikzpicture}[
      edge/.style={->, draw=black, thick, -Stealth},
      scale=0.7
      ]
      \node[cor={\coeff*1.5}{\coeff*4}] (A) at (-1, 0) {$\substack{\bm{(A)}\\ \\ \bm{0}, \ P_0}$};
      \node[cor={\coeff*1.5}{\coeff*4}] (B) at (1, 0) {$\substack{\bm{(B)}\\ \\ \bm{1}, \ P_1}$};
    \end{tikzpicture}
    \caption{Configuration 3}
  \end{subfigure}
  \caption{
    Different indistinguishable configurations where $P_0$ is a consistency prediction for Configuration 1 and $P_1$ is a consistency prediction for Configuration 2.
    No Algorithm that reaches agreement in Configurations 1 and 2 can reach agreement in Configuration 3.
  }
  \label{fig:local-predictions}
\end{figure}

\begin{proof}
  Let $\alpha \in [1/2, 1[$, we split the nodes in two sets $A$ and $B$ of equal size assuming $n$ is divisible by two.
  We consider an algorithm \alg which is $\alpha$-consistent.
  Notice that, in particular, \alg is $1/2$-consistent.\\

  We consider the following predictions, (see \autoref{fig:local-predictions}):
  \begin{itemize}
  \item
    $P_0$ is a prediction such that \alg reaches agreement when the nodes of $B$ are faulty and the nodes of $A$ are honest and have input bit $0$.
  \item
    $P_1$ is a prediction such that \alg reaches agreement when the nodes of $A$ are faulty and the nodes of $B$ are honest and have input bit $1$.
  \end{itemize}
  
  We now consider the following configurations:\\
  \textbf{Configuration 1:}
  \begin{itemize}
  \item
    Nodes in $A$ are honest with input bit $0$ and prediction $P_0$.
  \item
    Nodes in $B$ are faulty nodes and play as if they had input bit $1$ and prediction $P_1$.
  \end{itemize}
  \textbf{Configuration 2:}
  \begin{itemize}
  \item
    Nodes in $A$ are faulty nodes and play as if they had input bit $0$ and prediction $P_0$.
  \item
    Nodes in $B$ are honest with input bit $1$ and prediction $P_1$.
  \end{itemize}
  \textbf{Configuration 3:}
  \begin{itemize}
  \item
    Nodes in $A$ are honest with input bit $0$ and prediction $P_0$.
  \item
    Nodes in $B$ are honest with input bit $1$ and prediction $P_1$.
  \end{itemize}

  As we assumed \alg is $\alpha$-consistent, the nodes of $A$ agree on $0$ in Configuration 1 while the nodes of $B$ agree on $1$ on configuration $2$.
  The nodes of $A$ however cannot distinguish between configurations 1 and 3, the same goes for the nodes of $B$ between configurations 2 and 3.
  The nodes of $A$ and $B$ therefore have different output in Configuration 3 even though there are no faulty nodes, \alg is therefore $0$-robust.
\end{proof}
\end{document}